\newtheorem{proposition}{Proposition}
\newtheorem{theorem}{Theorem}
 \def\x{\textbf{\textit{x}}}  \def\vv{\textbf{\textit{v}}}
\def\y{\textbf{\textit{y}}}  \def\b{\textbf{\textit{b}}} 
  \def\0{\boldsymbol 0}       
 \def\p{\textbf{\textit{p}}} 
\def\etal{\itshape et al.}
\newcommand{\LD}{{\mbox{\bf LD}}}
\newcommand{\MD}{{\mbox{\bf MD}}}
\newcommand{\ZD}{{\mbox{\bf ZD}}}
\newcommand{\EZD}{{\mbox{\bf EZD}}}
\newcommand{\bec}{\begin{center}}
\newcommand{\enc}{\end{center}}
\newcommand{\bee}{\begin{eqnarray*}}
\newcommand{\ene}{\end{eqnarray*}}
\newcommand{\beq}{\begin{equation}}
\newcommand{\eeq}{\end{equation}}
\begin{document}

\title{\bf A new class of $L_q$-norm zonoid depths}
\vskip 5mm

\author {{Xiaohui Liu$^{a, b}$   \footnote{Corresponding author's email: csuliuxh912@gmail.com.}
        Yuanyuan Li$^{a, b}$, Qing Liu$^{a, b}$
        }\\ \\[1ex]
        {\em\footnotesize $^a$ School of Statistics, Jiangxi University of Finance and Economics, Nanchang, Jiangxi 330013, China}\\
        {\em\footnotesize $^b$ Research Center of Applied Statistics, Jiangxi University of Finance and Economics, Nanchang,}\\ {\em\footnotesize Jiangxi 330013, China}\\
}

\date{}
\maketitle

\begin{center}
{\sc Abstract}
\end{center}

Zonoid depth, as a well-known ordering tool, has been widely used in multivariate analysis. However, since its depth value vanishes outside the convex hull of the data cloud, it suffers from the so-called `outside problem', which consequently hinders its many practical applications, e.g., classification. In this note, we propose a new class of \emph{$L_q$-norm zonoid depths}, which has no such problem. Examples are also provided to illustrate their contours.

\vspace{2mm}

{\small {\bf\itshape Key words:} $L_q$-norm zonoid depth; Outside problem; Zonoid depth; Mahalanobis depth}
\vspace{2mm}

{\small {\bf2000 Mathematics Subject Classification Codes:} 62F10; 62F40; 62F35}

\setlength{\baselineskip}{1.5\baselineskip}

\vskip 0.1 in
\section{Introduction}
\paragraph{}
\vskip 0.1 in \label{Introduction}

To provide multivariate observations a proper ordering, \cite{Tuk1975} suggested to so-called halfspace depth function. This depth enjoys many desirable properties, and can be utilized to extend the univariate median, as well as many other location estimators, to higher dimensions. Many other depth functions are also proposed motivated by \cite{Tuk1975} from different principle. The most famous notions include the simplicial depth \citep{Liu1990}, and projection depth \citep{Zuo2003}.

These three depth notions satisfy all four properties, namely, (a) \emph{affine-invariance}, (b) \emph{maximality at a center point}, (c) \emph{monotonicity related to the center point}, and (d) \emph{vanishing at infinity}, of defining a general statistical depth function \citep{ZS2000}. They all maximize at the medians, which all deduce to the ordinary median in the univariate case. In this sense, the statistical inferential procedures induced from these depths enjoy somewhat some robust properties.

Furthermore, in statistics, there exist some other depths, which are \emph{not robust} in the sense that they maximize at the conventional mean. But since they can fit to some other different applications, they are still of practical importance.

Among them, the simplest one is the Mahalanobis depth \citep{ZS2000}, which is constructed on the Mahalanobis distance of $\x$ to the sample mean $\bar{X}$. Giving the random sample $\mathcal{X}^{n} = \{X_1, X_2, \cdots, X_n\} \subset \mathbb{R}^d$ ($d \ge 1$), its definition is as follows:
\begin{eqnarray*}
  \MD(\x, P_n) = \frac{1}{1 + \sqrt{(\x - \bar{X})^\top \hat{\Sigma}_n^{-1} (\x - \bar{X})}},
\end{eqnarray*}
where $\hat{\Sigma}_n = \frac{1}{n} \sum\limits_{i=1}^n (X_i - \bar{X}) (X_i - \bar{X})^\top$, and $P_n$ denotes the empirical probability measure related to $\mathcal{X}^n$. However, the contours induced from this depth is always of elliptical shape, even though the data are generated from a \emph{non-elliptical} distribution. It hence fails to characterize some special properties of the data cloud, although it is both computationally and conceptually simple.

An another famous depth of this type is the zonoid depth suggested by \cite{KM1997}. Its empirical version is as follows:
\begin{eqnarray}
\label{Def:ZD}
  \ZD(\x, P_n) &=&
  \begin{cases}
    \sup\left\{ \alpha: \x = \sum\limits_{i=1}^n p_i X_i, ~\sum\limits_{i=1}^n p_i = 1, ~np_i \in [0, 1/\alpha], ~\forall i \right\}, & \x \in \textbf{conv}(\mathcal{X}^n)\\[3ex]
    0, & \x \notin \textbf{conv}(\mathcal{X}^n),
  \end{cases}
\end{eqnarray}
where $\textbf{conv}(\mathcal{X}^n)$ denotes the convex hull of the data cloud $\mathcal{X}^n$. Different from the Mahalanobis depth above, zonoid depth induces data-dependent contours, and can characterize the underlying distribution in a unique way.

Unfortunately, this depth suffers from the so-called `outside problem', i.e., its depth vanishes outside the convex hull of the data cloud. As a result, it is \emph{impossible} to classify a point outside all convex hulls of the classes into a specific class when using the zonoid depth in classification \citep{LMM2014}. Recently, \cite{Liu2017} suggested a new class of general empirical depth, including the zonoid depth as a special case. Nevertheless, they still suffer from the outside problem.

Efforts to overcome this problem include using a combination of zonoid and Mahalanobis depth, i.e., zonoid-Mahalanobis depth, suggested by \cite{Hob2003}. Its definition is as follows:
\begin{eqnarray*}
  D(\x, P_n) = \max\left\{\ZD(\x, P_n), \beta \MD(\x, P_n)\right\} \text{ with } \beta = \frac{1}{\max\limits_{1\leq i \leq n} p_i},
\end{eqnarray*}
for some $p_i$'s specified in \ref{Def:ZD}. However, this solution unnaturally requires combining two depth functions, and more importantly, the contours induced from this depth is not convex as shown in \cite{MH2006}, which may bring difficulty to its computation.

How to further improve this is not trivial. 
In the sequel we propose a new class of $L_q$-norm zonoid depths. It turns that the new depths satisfies all four properties of defining a general statistical function as suggested in \cite{ZS2000}. Their depth values do not vanish even outside the convex hull of the data cloud, behaving similarly to the projection depth \citep{Zuo2003}. Nevertheless, since $L_q$-norm zonoid depths do not involve the methodology of projection pursuit, their computation is easy to achieve, while the computation of the projection depth is quite intensive \citep{LZ2014}. When $q = 2$, the $L_q$-norm zonoid depths deduce to the Mahalanobis depth, and hence they include the so-called Mahalanobis depth as a special case. Furthermore, by taking limit to $q$, we obtain the degenerate case $L_\infty$--norm zonoid depth corresponding to $q = +\infty$, whose form is very close to the conventional zonoid depth developed by \cite{KM1997}. The contours induced from these $L_q$-norm zonoid depths for any $q \ge 1$ are still convex and nested, and center at the conventional sample mean. Hence, these depth notions may serve as an alterative to the zonoid depth if the outside problem is a big concern having to be taken into account.

The rest of this note is organized as follows. We present the methodology and main results of this note in Section \ref{Sec:MMS}. Some illustrative examples are given in Section \ref{Sec:Illustrations}. Concluding remarks end this note.

\vskip 0.1 in
\section{Methodology and main results}
\paragraph{}
\vskip 0.1 in \label{Sec:MMS}

In this section, we first explore the idea behind an equivalent definition to the zonoid depth given \cite{KM1997}. Then we suggest a new class of $L_q$-norm zonoid depths, which do not vanish outside the convex hull of the data cloud. The population versions are also derived under some mild conditions.

Let's start with definition of the zonoid depth. For the original zonoid depth given in \eqref{Def:ZD}, \cite{KM1997} derived also the following equivalent definition:
\begin{eqnarray*}
  \ZD(\x, P_n) &=& \sup\left\{ \left(\max\limits_{1\le i \le n} (np_i)\right)^{-1}: \x = \sum_{i=1}^n p_i X_i, ~(p_1, p_2, \cdots, p_n) \in \mathbb{S}^{n-1}\right\}\\
  &=& \left(\inf\left\{\max\limits_{1\le i \le n} (np_i): (p_1, p_2, \cdots, p_n) \in \mathbb{S}_x \right\} \right)^{-1},
\end{eqnarray*}
when $\x \in \textbf{conv}(\mathcal{X}^n)$, where $\mathbb{S}_x = \{\p \in \mathbb{S}^{n-1}: \x = \sum_{i=1}^n p_i X_i = \mathbf{A}_X \p\}$ with $\mathbf{A}_X = (X_1, X_2, \cdots, X_n)$, and $\mathbb{S}^{n-1} = \{\p = (p_1, p_2, \cdots, p_n)^\top: ~\sum_{i=1}^n p_i = 1, ~ p_i \ge 0, \forall i\}$.

This equivalent definition indicates that the zonoid depth value of $\x$ depends on the similarity between $\p = (p_1, p_2, \cdots, p_n)^\top \in \mathbb{S}_x$ and the center $\p_0 = (\frac{1}{n}, \frac{1}{n}, \cdots, \frac{1}{n})^\top$ of $\mathbb{S}^{n-1}$, and we are using the most similar point $\p$ to $\p_0$ on $\mathbb{S}_x$ to characterize the deepness of $\x$ with respect to $\mathcal{X}^n$ \citep{Liu2017}. Obviously, the values of $p_i$'s involved in $\max\limits_{1\le i \le n} (np_i)$ can not be negative, because it may consequently make the depth value of $\x$ to be negative. As a result, $\x$ should be contained in the convex hull of $\mathcal{X}^n$ if one wants $\ZD(\x, P_n) > 0$. In turn, the outside problem exists. Although using instead $\max\limits_{1\le i \le n} |np_i|$ in the definition can avoid this `negative value' problem, it is easy to check that the depth function based on $\max\limits_{1\le i \le n} |np_i|$ does not satisfy all four properties of defining a general depth function any more, nevertheless. Hence, additional efforts are still in need to overcome the so-called outside problem.

Observe that one may also use the \emph{distance} between $\p$ and $\p_0$ to measure their similarity. In fact, for any $\p \in \mathbb{S}^{n - 1}$, since
\begin{eqnarray*}
  \max_{1\leq i \leq n} (n p_i) &=& \max_{1 \leq i \leq n} (np_i - 1 + 1)\\
  &=& 1 + \max_{1 \leq i \leq n} (np_i - 1),
\end{eqnarray*}
the zonoid depth can be further reexpressed as
\begin{eqnarray*}
  \ZD(\x, P_n) &=& \sup\left\{\frac{1}{1 + \max\limits_{1 \leq i \leq n} (np_i - 1)}: \p \in \mathbb{S}_x \right\}\\
  &=& \frac{1}{1 + \inf\limits_{\p \in \mathbb{S}_x} \left\{\max\limits_{1 \leq i \leq n} (np_i - 1) \right\}}.
\end{eqnarray*}
Hence, if adding an absolute sign to $np_i - 1$, we may enlarge the domain $\mathbb{S}_x$ of $\p$ to the hyperplane $\mathbb{P}_x$, and possibly still can obtain a well defined depth function, because $\mathbb{P}_x$ includes the convex polytope $\mathbb{S}_x$ as a subset, and we are using the most similar point to $\p_0$ in $\mathbb{P}_x$ to define the depths. Here $\mathbb{P}_x = \{\p \in \mathbb{R}^{n}: \mathbf{A}_x \p = \x, ~ \textbf{1}_n^\top \p = 1\}$ with $\textbf{1}_n = (1, 1, \cdots, 1)^\top$ being the vector of $n$ ones.

Motivated by this, we propose to consider the following $L_\infty$ zonoid depth:
\begin{eqnarray}
\label{eqn:EZD}
  \LD_\infty(\x, P_n) = \frac{1}{1 + S(\x, P_n)}.
\end{eqnarray}
where
\begin{eqnarray*}
  S(\x, P_n) = \inf\limits_{\p \in \mathbb{P}_x} d_{\infty} (n\p, n\p_0).
\end{eqnarray*}
Since $\mathbb{P}_x$ does \emph{not} necessarily require the components of $\p$ to be positive, the value of $\LD_\infty(\x, P_n)$ can be well defined for some $\x$ even outside of the convex hull of the data cloud.

Here we still use the term `zonoid' is because the definition of the $L_\infty$ zonoid depth involves the \emph{lift zonoid} as did in \cite{KM1997}:
\begin{eqnarray*}
  \{z(P, g): g:\mathbb{R}^d \rightarrow R^1 \text{ measurable}\}
\end{eqnarray*}
for given probability measure $P$, where $z(P, g) = (z_0(P, g), \zeta(P, g)^\top)^\top \in \mathbb{R}^{d+1}$ with
\begin{eqnarray*}
  \begin{cases}
    z_0(P, g) = \int g(X) dP\\[3ex]
    \zeta(P, g) = \int Xg(X) dP.
  \end{cases}
\end{eqnarray*}

For the depth given in \eqref{eqn:EZD}, the following theorem indicates that it is a well defined statistical depth function. Hence, it can also be used to provide a desirable center-outward ordering for multivariate observations like what the zonoid depth of \cite{KM1997}, as well as many other depths aforementioned, does.

\begin{theorem}
\label{th:GZD}
  For given $\mathcal{X}^n \subset \mathbb{R}^d ~(d \ge 1)$, suppose $\hat{\Sigma}_n$ is a positive definite matrix. Then we have: (P1). \textit{Affine-invariance}. For any $d\times d$ nonsingular matrix $\mathbf{A}$ and $d$-vector $\b$, we have that $\LD_\infty(\mathbf{A}\x + \b, P_{n, \mathbf{A} X+\b}) = \LD_\infty(\x, P_n)$. (P2). \textit{Maximality at $\bar{X}$}. $\LD_\infty (\bar{X}, P_n) = \sup\limits_{\x \in \mathbb{R}^{d}} \LD_\infty (\bar{X}, P_n)$. (P3). \textit{Monotonicity relate to $\bar{X}$}. $\LD_\infty (\x, P_n) \leq  \LD_\infty (\bar{X} + \lambda  (\x - \bar{X}), P_X)$ holds for any $\lambda \in [0, 1]$. (P4). \textit{Vanishing at infinity}. $\LD_\infty (\x, P_n) \rightarrow 0$ as $\|\x\| \rightarrow \infty$. Here $P_{n, \mathbf{A} X+\b}$ denotes the empirical probability measure related to $\{\mathbf{A} X_1 + \b, \mathbf{A} X_2 + \b, \cdots, \mathbf{A} X_n + \b\}$.
\end{theorem}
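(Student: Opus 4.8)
The plan is to verify the four axioms (P1)--(P4) directly from the variational form $S(\x, P_n) = \inf_{\p \in \mathbb{P}_x} d_\infty(n\p, n\p_0)$, exploiting that the feasible set $\mathbb{P}_x$ is an affine subspace while the objective is the sup-norm distance to the fixed barycentric center $\p_0 = (1/n, \cdots, 1/n)^\top$. A preliminary fact I would record first is that the hypothesis that $\hat{\Sigma}_n$ is positive definite forces $X_1, \cdots, X_n$ to affinely span $\mathbb{R}^d$, so the affine map $\p \mapsto (\mathbf{A}_X \p, \mathbf{1}_n^\top \p)$ is surjective onto $\mathbb{R}^d \times \{1\}$; hence $\mathbb{P}_x \neq \emptyset$ for every $\x$ and $S(\x, P_n) < \infty$ is well defined (and, since $d_\infty(n\p, n\p_0) \to \infty$ as $\|\p\| \to \infty$, the infimum is attained).

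For (P1) I would note that under $X_i \mapsto \mathbf{A} X_i + \b$ and $\x \mapsto \mathbf{A}\x + \b$ the constraint $\sum_i p_i (\mathbf{A} X_i + \b) = \mathbf{A}\x + \b$ together with $\mathbf{1}_n^\top \p = 1$ reduces, because $\mathbf{A}$ is nonsingular, to $\sum_i p_i X_i = \x$; thus the feasible set is literally unchanged, and since both $\p_0$ and $d_\infty$ depend only on $n$, the infimum $S$---and therefore $\LD_\infty$---is invariant. For (P2), since $\bar{X} = \sum_i \frac{1}{n} X_i$ we have $\p_0 \in \mathbb{P}_{\bar{X}}$, so $S(\bar{X}, P_n) = d_\infty(n\p_0, n\p_0) = 0$ and $\LD_\infty(\bar{X}, P_n) = 1$; as $S \ge 0$ everywhere, this value is the global maximum.

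The key computation is (P3). Writing $\y = \bar{X} + \lambda(\x - \bar{X})$ for $\lambda \in [0,1]$, I would take an arbitrary $\p \in \mathbb{P}_x$ and form the convex combination $\q = \lambda \p + (1 - \lambda)\p_0$. One checks $\mathbf{1}_n^\top \q = 1$ and $\mathbf{A}_X \q = \lambda \x + (1-\lambda)\bar{X} = \y$, so $\q \in \mathbb{P}_y$; moreover $n\q - n\p_0 = \lambda(n\p - n\p_0)$, whence $d_\infty(n\q, n\p_0) = \lambda\, d_\infty(n\p, n\p_0)$. Taking the infimum over $\p \in \mathbb{P}_x$ gives $S(\y, P_n) \le \lambda S(\x, P_n) \le S(\x, P_n)$, i.e.\ $\LD_\infty(\x, P_n) \le \LD_\infty(\y, P_n)$, which is (P3) (in fact the stronger bound $S(\y) \le \lambda S(\x)$).

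Finally, (P4) is where I expect the only real work. Using $\sum_i (p_i - \frac{1}{n}) = 0$ I would rewrite, for any feasible $\p$, $\x - \bar{X} = \sum_i (p_i - \frac{1}{n})(X_i - \bar{X})$, and bound by the triangle inequality $\|\x - \bar{X}\| \le (\max_i |p_i - \tfrac{1}{n}|) \sum_j \|X_j - \bar{X}\|$, so that $\max_i |np_i - 1| \ge n\|\x - \bar{X}\| / \sum_j \|X_j - \bar{X}\|$. Taking the infimum over the nonempty $\mathbb{P}_x$ shows $S(\x, P_n) \ge n\|\x - \bar{X}\| / \sum_j \|X_j - \bar{X}\|$, which diverges as $\|\x\| \to \infty$ (the denominator is a fixed positive constant, since the data do not all coincide, again by positive definiteness), giving $\LD_\infty(\x, P_n) \to 0$. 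The main subtlety throughout is ensuring $\mathbb{P}_x$ is nonempty and that the relevant constants are genuinely positive, both of which are exactly what the positive-definiteness hypothesis supplies.
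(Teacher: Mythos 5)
Your proof is correct, and for (P1)--(P3) it is essentially the paper's route: the paper declares (P1)--(P2) trivial, and its (P3) argument rests on the same convex combination of a feasible $\p \in \mathbb{P}_x$ with the center $\p_0$, though it dresses this up with an attained minimizer $\p_x^*$, a local set $\bar{\mathbb{P}}_{y,x}$, and a convexity argument; your version is cleaner, needs no attainment of the infimum, and yields the sharper contraction $S(\y, P_n) \le \lambda S(\x, P_n)$ rather than mere monotonicity. Where you genuinely diverge is (P4). The paper argues in two steps: first that $\|\x\|^2 = \p^\top \mathbf{A}_X^\top \mathbf{A}_X \p \le \lambda_{\max}\|\p\|^2$ forces $\|\p\| \to \infty$ for every feasible $\p$ as $\|\x\| \to \infty$, and then that $d_\infty(n\p, n\p_0) \to \infty$ as $\|\p\| \to \infty$ via a reverse-triangle bound on the $L_q$ norm. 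You instead derive the direct quantitative lower bound $S(\x, P_n) \ge n\|\x - \bar{X}\| / \sum_j \|X_j - \bar{X}\|$ from $\x - \bar{X} = \sum_i (p_i - \tfrac{1}{n})(X_i - \bar{X})$, which is more elementary, gives an explicit linear divergence rate, and sidesteps the uniformity-over-$\p$ issue implicit in the paper's two-step argument. Your preliminary observation that positive definiteness of $\hat{\Sigma}_n$ guarantees $\mathbb{P}_x \neq \emptyset$ for all $\x$ (so the depth is well defined everywhere) is a point the paper leaves implicit and is worth recording.
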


Remarkably, the assumption that $\hat{\Sigma}_n$ is a positive definite matrix holds in probability one when the covariance matrix cov$(X)$ of $X$ is positive. For the case of cov$(X)$ being singular, we recommend to reduce the dimensionality of the data in advance, and then build the statistical depth functions in the lower dimensional space.

By observing that
\begin{eqnarray*}
  \lim_{q \rightarrow +\infty} \sqrt[q]{\frac{1}{n} \sum_{i=1}^n |np_i - 1|^q} = d_{\infty} (n\p, n\p_0),
\end{eqnarray*}
we may further extend $\LD_\infty(\x, P_n)$ to the following version
\begin{eqnarray}
\label{eqn:LDq}
  \LD_q(\x, P_n) = \frac{1}{1 + S_q(\x, P_n)},
\end{eqnarray}
for some $q \in [1, +\infty]$, where
\begin{eqnarray*}
  S_q(\x, P_n) = \inf_{\p \in \mathbb{P}_x} d_q(n\p, n\p_0)
\end{eqnarray*}
with
\begin{eqnarray*}
  d_q(n\p, n\p_0) = \sqrt[q]{\frac{1}{n} \sum_{i=1}^n |np_i - 1|^q}.
\end{eqnarray*}

The class of $L_q$-norm zonoid depths defined here is quite fruitful. Similar to Theorem \ref{th:GZD}, we can show that $\LD_q(\x, P_n)$ satisfies the definition of the general statistical depth function suggested by \cite{ZS2000}.

Furthermore, based on the discussion on the Euclidean likelihood in \cite{Owe2001}, we directly derive the following result.
\begin{proposition}
\label{pro:ZDqvsMD}
  For $q = 2$, the $L_q$-norm zonoid depths defined in \eqref{eqn:LDq} deduce to the Mahalanobis depth. That is,
  \begin{eqnarray*}
    \LD_2(\x, P_n) = \frac{1}{1 + \sqrt{(\x - \bar{X})^\top \hat{\Sigma}_n^{-1} (\x - \bar{X})}} = \MD(\x, P_n).
  \end{eqnarray*}
\end{proposition}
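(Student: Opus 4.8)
The plan is to recognize that for $q=2$ the squared quantity $S_2(\x,P_n)^2 = \inf_{\p\in\mathbb{P}_x}\frac1n\sum_{i=1}^n (np_i-1)^2$ is a constrained least-squares problem (the Euclidean empirical likelihood computation of \cite{Owe2001}) admitting a closed-form minimizer. First I would reparametrize by setting $\u = n\p - \mathbf{1}_n$, i.e.\ $u_i = np_i - 1$, so that the objective becomes $\frac1n\|\u\|^2$. The two constraints defining $\mathbb{P}_x$ then translate into linear constraints on $\u$: the normalization $\mathbf{1}_n^\top\p = 1$ becomes $\mathbf{1}_n^\top\u = 0$, while $\mathbf{A}_X\p = \x$ becomes $\mathbf{A}_X\u = n(\x - \bar X)$. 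Because $\mathbf{1}_n^\top\u = 0$, replacing each $X_i$ by its centered version $X_i-\bar X$ leaves $\mathbf{A}_X\u$ unchanged, so the mean constraint may be rewritten as $\widetilde{\mathbf{A}}_X\u = n(\x-\bar X)$, where $\widetilde{\mathbf{A}}_X = (X_1-\bar X,\dots,X_n-\bar X)$ denotes the centered data matrix.

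Next I would minimize $\|\u\|^2$ subject to these linear constraints by Lagrange multipliers, introducing a $d$-vector $\boldsymbol{\lambda}$ for $\widetilde{\mathbf{A}}_X\u = n(\x-\bar X)$ and a scalar $\mu$ for $\mathbf{1}_n^\top\u=0$. The stationarity condition gives $\u = \widetilde{\mathbf{A}}_X^\top\boldsymbol{\lambda} + \mu\,\mathbf{1}_n$. The key simplification is that the centered columns sum to zero, $\widetilde{\mathbf{A}}_X\mathbf{1}_n = \0$, so imposing $\mathbf{1}_n^\top\u=0$ forces $\mu=0$ and hence $\u = \widetilde{\mathbf{A}}_X^\top\boldsymbol{\lambda}$. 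Substituting this into the mean constraint yields $\widetilde{\mathbf{A}}_X\widetilde{\mathbf{A}}_X^\top\boldsymbol{\lambda} = n(\x-\bar X)$, at which point I would invoke the identity $\widetilde{\mathbf{A}}_X\widetilde{\mathbf{A}}_X^\top = \sum_{i=1}^n (X_i-\bar X)(X_i-\bar X)^\top = n\hat{\Sigma}_n$. Positive definiteness of $\hat{\Sigma}_n$, assumed exactly as in Theorem \ref{th:GZD}, then delivers $\boldsymbol{\lambda} = \hat{\Sigma}_n^{-1}(\x-\bar X)$.

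Finally I would evaluate the minimal objective as $\|\u\|^2 = \boldsymbol{\lambda}^\top\widetilde{\mathbf{A}}_X\widetilde{\mathbf{A}}_X^\top\boldsymbol{\lambda} = n\,\boldsymbol{\lambda}^\top\hat{\Sigma}_n\boldsymbol{\lambda} = n(\x-\bar X)^\top\hat{\Sigma}_n^{-1}(\x-\bar X)$, so that $\frac1n\|\u\|^2 = (\x-\bar X)^\top\hat{\Sigma}_n^{-1}(\x-\bar X)$ and therefore $S_2(\x,P_n) = \sqrt{(\x-\bar X)^\top\hat{\Sigma}_n^{-1}(\x-\bar X)}$. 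Inserting this into \eqref{eqn:LDq} at $q=2$ reproduces exactly $\MD(\x,P_n)$, which is the claim. The only point requiring genuine care, as opposed to the routine Lagrange bookkeeping, is justifying that the infimum over the \emph{unbounded} affine set $\mathbb{P}_x$ is attained and coincides with the stationary value: here positive definiteness of $\hat{\Sigma}_n$ again does the work, since it forces the data to affinely span $\R^d$ (so $\mathbb{P}_x\neq\emptyset$ for every $\x$) and the objective $\frac1n\|\u\|^2$ is a strictly convex, coercive quadratic restricted to a nonempty affine subspace, whence the critical point found above is its unique global minimizer.
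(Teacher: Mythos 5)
Your proof is correct, and it follows exactly the route the paper intends: the paper gives no explicit proof of Proposition \ref{pro:ZDqvsMD}, merely asserting that it follows ``directly'' from the Euclidean likelihood discussion in Owen (2001), and your constrained least-squares computation (substituting $u_i = np_i-1$, centering so that $\widetilde{\mathbf{A}}_X\mathbf{1}_n=\0$ kills the multiplier $\mu$, and using $\widetilde{\mathbf{A}}_X\widetilde{\mathbf{A}}_X^\top = n\hat{\Sigma}_n$) is precisely the standard argument behind that citation. Your closing remark on attainment of the infimum over the unbounded affine set $\mathbb{P}_x$, and on why positive definiteness of $\hat{\Sigma}_n$ guarantees $\mathbb{P}_x\neq\emptyset$, actually supplies details the paper leaves implicit.
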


In this sense, Proposition \ref{pro:ZDqvsMD} indicates that the class of $L_q$-norm zonoid depths defined in this paper include the Mahalanobis depth as a special case.

To facilitate the theoretical derivation, it is desirable to given the population version of the defined statistical depth function. For the $L_q$-norm zonoid depths above, the population versions are given as follows. That is,
\begin{eqnarray*}
  \LD_{q}(\x, P) = \frac{1}{1 + \inf_{g \in \mathcal{G}_x} d_q^F (g, 1)},
\end{eqnarray*}
where $\mathcal{G}_x := \left\{g(\cdot): \int Xg(X) dP = \x,~ \int g(X) dP = 1\right\}$, i.e., the family of functions satisfying both $\int Xg(X) dP = \x$ and $\int g(X) dP = 1$, and $d_q^F (g, 1) = \sqrt[q]{\int |g(X) - 1|^q dP}$ denotes the $L_q$-norm for functions of $g(\x) - 1$.

The following theorem shows the convergence of the sample $L_q$-norm zonoid depth to its population version for any $q \in [1, +\infty]$ under some regular conditions.
\begin{theorem}
\label{th:convergence}
  Suppose that $X_1, X_2, \cdots, X_n$ are i.i.d. copies of $X$ such that $E|X| < +\infty$ and cov$(X) > 0$. Then we have
  \begin{eqnarray*}
    \LD_{q}(\x, P_n) \stackrel{p}{\longrightarrow} \LD_{q}(\x, P),\quad \text{ for any } \x \in \mathbb{R}^d,
  \end{eqnarray*}
  for $q \in [1, +\infty]$, where $\stackrel{p}{\longrightarrow}$ denotes the convergence in probability.
\end{theorem}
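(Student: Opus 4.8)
The plan is to reduce the statement to the convergence of the optimal values $S_q(\x,P_n)\to S_q(\x,P)$ and then to sandwich $S_q(\x,P_n)$ between matching upper and lower bounds. The first step is to put both problems on a common footing. Writing $h=g-1$, the population problem reads $S_q(\x,P)=\inf\{\,\|h\|_{L_q(P)}:E_P[h]=0,\ E_P[Xh]=\x-\mu\,\}$ with $\mu=E_PX$, since $\int g\,dP=1$ and $\int Xg\,dP=\x$ translate into $E_P[h]=0$ and $E_P[Xh]=\x-\mu$. The sample problem is literally the same object with $P$ replaced by $P_n$: setting $h_i=np_i-1$ turns the constraints $\mathbf 1_n^\top\p=1$ and $\mathbf A_X\p=\x$ into $E_{P_n}[h]=0$ and $E_{P_n}[Xh]=\x-\bar X$, and turns $d_q(n\p,n\p_0)$ into $\|h\|_{L_q(P_n)}$. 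Thus the theorem is a statement about continuity of the constrained optimal-value map $P\mapsto S_q(\x,P)$ along the random sequence $P_n$. The analytic inputs are the strong laws $\bar X\to\mu$, $\hat\Sigma_n\to\Sigma$ (available because $E|X|<\infty$ and $\operatorname{cov}(X)>0$ force $E|X|^2<\infty$), and $\frac1n\sum_i f(X_i)\to E_Pf(X)$ for each fixed $f\in L_1(P)$.

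For the upper bound $\limsup_n S_q(\x,P_n)\le S_q(\x,P)$ I would fix $\varepsilon>0$, take a population-feasible $h^\ast$ with $\|h^\ast\|_{L_q(P)}\le S_q(\x,P)+\varepsilon$, and feed the values $h^\ast(X_i)$ into the sample problem. By the strong law these values are asymptotically feasible: $E_{P_n}[h^\ast]\to 0$ and $E_{P_n}[Xh^\ast]\to\x-\mu$ (the second limit uses only $Xh^\ast\in L_1(P)$, which is built into population-feasibility, so no extra moment of $X$ is needed), while the target right-hand side $\x-\bar X\to\x-\mu$. I would then repair the $o(1)$ constraint residuals by adding a small correction $\delta_i$ whose coefficients come from inverting $\hat\Sigma_n\to\Sigma>0$; since the residuals are $o(1)$ and $\hat\Sigma_n^{-1}$ stays bounded, $\|\delta\|_{L_q(P_n)}\to 0$, while $\|h^\ast\|_{L_q(P_n)}\to\|h^\ast\|_{L_q(P)}$ by the law of large numbers applied to $|h^\ast|^q\in L_1(P)$. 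This yields $\limsup_n S_q(\x,P_n)\le S_q(\x,P)+\varepsilon$ for every $\varepsilon$.

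For the lower bound $\liminf_n S_q(\x,P_n)\ge S_q(\x,P)$ I would pass to the convex (Lagrangian) dual, which moves the optimization to the fixed finite-dimensional multiplier space: with conjugate exponent $r=q/(q-1)$,
\[
S_q(\x,P)=\sup\left\{\boldsymbol\lambda^\top(\x-\mu):\ \lambda_0\in\R,\ \boldsymbol\lambda\in\R^{d},\ \|\lambda_0+\boldsymbol\lambda^\top X\|_{L_r(P)}\le 1\right\},
\]
and identically for $P_n$ with $\mu$ and $\|\cdot\|_{L_r(P)}$ replaced by $\bar X$ and $\|\cdot\|_{L_r(P_n)}$. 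I would then show that the constraint functional $(\lambda_0,\boldsymbol\lambda)\mapsto E_{P_n}|\lambda_0+\boldsymbol\lambda^\top X|^{r}$ converges to its population counterpart uniformly on compact multiplier sets (a uniform law of large numbers for this parametric family, via the pointwise law together with local domination and equicontinuity), so the feasible multiplier sets converge and, the objective being linear, the dual values converge. Taking a population-optimal multiplier, rescaling it slightly to restore $P_n$-feasibility, and evaluating the convergent linear objective gives $\liminf_n S_q(\x,P_n)\ge S_q(\x,P)-\varepsilon$. Combining the two bounds gives $S_q(\x,P_n)\to S_q(\x,P)$ on a probability-one event, hence in probability, and since $t\mapsto 1/(1+t)$ is continuous the same holds for $\LD_q(\x,P_n)\to\LD_q(\x,P)$.

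The step I expect to be the main obstacle is the interplay between $q$ and the available moments: the hypotheses furnish only $E|X|^2<\infty$, whereas the dual exponent $r=q/(q-1)$ exceeds $2$ precisely when $q\in[1,2)$, so for those $q$ the dual constraint $\|\lambda_0+\boldsymbol\lambda^\top X\|_{L_r(P)}\le1$ may be vacuous unless $X\in L_r$, and strong duality then needs separate justification. Symmetrically, for $q>2$ the correction in the upper bound cannot be taken linear in $X$, since its $L_q(P_n)$-norm need not stay bounded when $E|X|^q=\infty$; it must instead be built from a bounded, truncated function of $X$, with $\operatorname{cov}(X)>0$ ensuring that the truncated cross-moment matrix is invertible once the truncation level is large. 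Reconciling these two regimes in a single argument, together with the boundary cases $q=1$ (where $r=\infty$) and $q=\infty$ (where $r=1$ and $d_q$ is the max/essential supremum), is where the real work lies; everything else is bookkeeping around the laws of large numbers.
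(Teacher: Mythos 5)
Your reduction to $S_q(\x,P_n)\to S_q(\x,P)$ and your upper bound are essentially the paper's argument: the paper takes the population minimizer $g_0$, sets $p_{ni}=g_0(X_i)/\sum_j g_0(X_j)$, and repairs the $o_p(1)$ constraint residual by projecting onto $\mathbb{P}_{x,n}$ --- a correction lying in the row space of $\binom{\mathbf{A}_X}{\textbf{1}_n^\top}$, i.e.\ exactly your affine-in-$X_i$ repair driven by invertibility of $\hat{\Sigma}_n$. (The difficulty you flag for $q>2$, that the $L_q(P_n)$-norm of such a correction is only controlled through second moments, is real, and the paper's own proof glosses over it in passing from $\|\p_n-\tilde{\p}_n\|=o_p(1)$ to $|d_q(n\tilde{\p}_n,n\p_0)-d_q(n\p_n,n\p_0)|=o_p(1)$; so on this half you are at least as careful as the source.)

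The genuine gap is in your lower bound. The dual identity you propose, with constraint $\|\lambda_0+\boldsymbol{\lambda}^\top X\|_{L_r(P)}\le 1$ and $r=q/(q-1)$, is not merely ``in need of separate justification'' for $q\in[1,2)$: if $E|X|^r=\infty$ (which the hypotheses $E|X|<\infty$, $\mathrm{cov}(X)>0$ permit, since they only give $E|X|^2<\infty$), the only feasible multipliers have $\boldsymbol{\lambda}=\boldsymbol{0}$, the dual value is $0$, and strong duality in the form you wrote is false, so the whole second half of the sandwich is missing for a nonempty range of $q$ (plus the boundary cases $q=1,\infty$ you defer). The paper avoids duality entirely with a conditioning trick: starting from the sample optimizer $\p_n^0\in\mathbb{P}_{x,n}$ it defines $g_n(X)=E(np_{n1}^0\mid X)$, which lies in $\mathcal{G}_x$ exactly because the $p_{ni}^0$ are exchangeable functions of the sample, and then conditional Jensen applied to the convex map $t\mapsto|t-1|^q$ gives $d_q^F(g_n,1)\le d_q(n\p_n^0,n\p_0)+o_p(1)$, hence $\inf_{g\in\mathcal{G}_x}d_q^F(g,1)\le\liminf_n S_q(\x,P_n)$ for every $q\in[1,+\infty]$ with no moment condition on $X$ beyond what feasibility already supplies. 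That construction --- producing a population-feasible competitor from the sample optimizer by conditioning on a single observation --- is the idea your proposal lacks, and without it (or a correct replacement for the dual step) the proof is incomplete.
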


It is worth mentioning that since the definition of $\LD_q(\x, P_n)$ involves the absolute sign, it is impossible to use directly the Lagrange multiple method as usually did for empirical likelihood in \cite{Owe2001} to prove this theorem. The absolute sign brings greatly additional difficulties to the proof; see the Appendix for details.

Theorem \ref{th:convergence} indicates that the proposed $L_q$-norm zonoid depths have nonsingular population versions. Similar to \cite{ZS2000}, we may define the $\alpha$-trimmed depth region as:
\begin{eqnarray*}
  \{\x: \LD_q(\x, P_n) \ge \alpha\},
\end{eqnarray*}
and its boundary to be the $\alpha$-th $L_q$-norm zonoid depth contour for some $\alpha \in (0, 1]$. Following the same line as Theorem \ref{th:GZD} and \cite{KM1997}, it is easy to check that these trimmed depth regions are also bounded, convex and nested for any $\alpha \in (0, 1]$, and converge to its population versions in probability in Hausdorff distance. We omit the details in this paper.

\vskip 0.1 in
\section{Illustrations}
\paragraph{}
\vskip 0.1 in\label{Sec:Illustrations}

To be useful, a depth notion is expected to be computable. Hence, in the sequel we will discuss briefly the computational issue related to the $L_q$-norm zonoid depths at the very beginning. Based on this discussion, a few illustrations are provided to help readers to gain more insight into the proposals.
\bigskip

\emph{$\diamond$ The computational issue}. When $q = 2$, $L_2$-norm zonoid depth is actually the Mahalanobis depth, its computation is easy to achieve. For $q = 4, 6, \cdots$, the computation of $L_q$-norm zonoid depths can be transformed into a convex optimization problem. 
By \cite{BV2004}, it can be resolved by using Newton's method with equality constraints; see also \cite{Owe2001}.

For the scenarios of $q \ge 1$ but $\neq 2, 4, \cdots$, the computation is more complex, because the objective function $d_q(n\p, n\p_0) = \sum_{i=1}^n |np_i - 1|^q$ involves the absolute sign, although it is still a convex function with respect to $\p$. Let $v_i^+ = \max\{np_i - 1, 0\}$ and $v_i^- = \max\{1 - np_i, 0\}$. The computation for these scenarios is equivalent to the following nonlinear convex optimization problem:
\begin{eqnarray}
\label{eqn:NonCO}
  &&\min_{\vv} f(\vv) = \sum_{i=1}^n (v_i^+ + v_i^-)^q\\
  &&\text{subject to } \begin{cases}
    \sum\limits_{i=1}^n (v_i^+ - v_i^-) = 0\\[3ex]
    \sum\limits_{i=1}^n (v_i^+ - v_i^-) (X_i - \x) = n(\x - \bar{X})\\[3ex]
    v_i^+, v_i^- \ge 0, \text{ for } i = 1, 2, \cdots, n,
  \end{cases}\nonumber
\end{eqnarray}
where $\vv = (v_1^+, \cdots, v_n^+, v_1^-, \cdots, v_n^-)$. Primal-dual interior-point methods may be used to resolve this problem \citep{BV2004}. In Matlab, the function \emph{fmincom.m} can used to resolve these problems.

Specially, for $q = 1$, the objective function in \eqref{eqn:NonCO} is linear, hence \eqref{eqn:NonCO} can be resolved by typically using the technique of linear programming. For the degenerate case $q = +\infty$, the computation can be transformed into the linear programming problem as follows:
\begin{eqnarray*}
  &&\min_{t, \vv} f(t, \vv) = t\\
  &&\text{subject to } \begin{cases}
    t \ge v_i^+ + v_i^-, ~i = 1, 2, \cdots n,\\[3ex]
    \sum\limits_{i=1}^n (v_i^+ - v_i^-) = 0\\[3ex]
    \sum\limits_{i=1}^n (v_i^+ - v_i^-) (X_i - \x) = n(\x - \bar{X})\\[3ex]
    t \ge 0, ~v_i^+, v_i^- \ge 0, \text{ for } i = 1, 2, \cdots, n.
  \end{cases}
\end{eqnarray*}
In Matlab, some well developed function, e.g., \emph{linprog.m}, can be utilized to fulfil these tasks.

\emph{$\diamond$ Illustrations}. In this part we will give some illustrations about the $L_q$-norm zonoid depths. The first example is based on a real data set taken from Example 4 in Page 57 of \cite{RouLer1987}. This data set are widely used in illustrating the depth contours. It consists of 28 animals' brain weight (in grams) and body weight (in kilograms). We use the logarithms of these data in this paper.

\begin{figure}[H]
\centering
	\subfigure[$L_1$-norm zonoid depth contours]{
	\includegraphics[angle=0,width=2.9in]{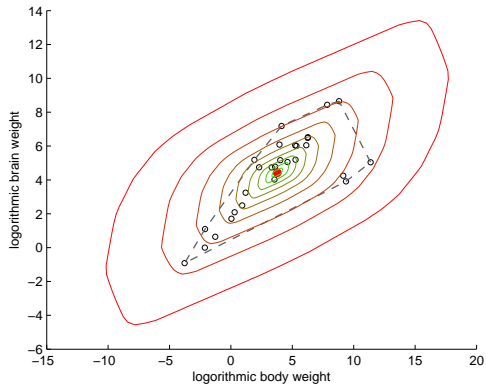}
	}\quad
	\subfigure[$L_2$-norm zonoid depth (Mahalanobis depth) contours]{
	\includegraphics[angle=0,width=2.9in]{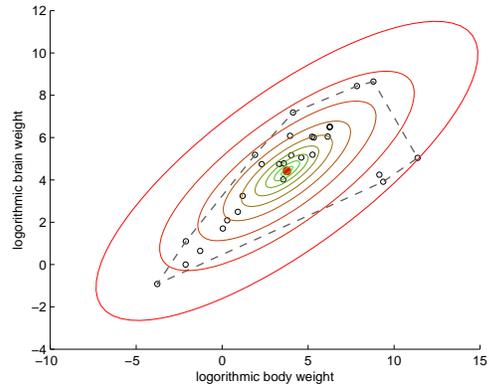}
	}
    \subfigure[$L_4$-norm zonoid depth contours]{
	\includegraphics[angle=0,width=2.9in]{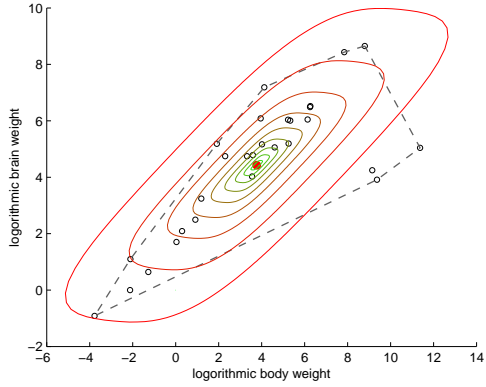}
	}\quad
    \subfigure[$L_8$-norm zonoid depth contours]{
	\includegraphics[angle=0,width=2.9in]{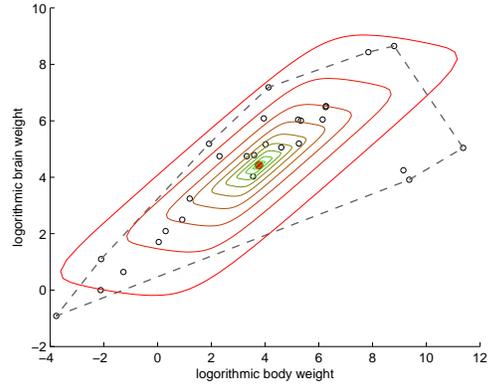}
	}
\caption{Shown are contours induced from the $L_q$-norm zonoid depth. Here the small hollow points stand for the observations, and the big solid point is the sample mean. The dashes line stands for the boundary of the convex hull of the data cloud.}
\label{fig:Real}
\end{figure}

Figure~\ref{fig:Real} reports ten depth contours with depth values $0.2500, 0.3333, \cdots, 1.0000$ from the
periphery inwards for the $L_q$-norm zonoid depths with $q = 1, 2, 4, 8$, respectively. As shown from this figure, $L_2$-norm zonoid depth induces some elliptical contours, which confirms Proposition \ref{pro:ZDqvsMD}. All of these depth functions center at the sample mean, and have positive depth values outside the convex hull of the data set, i.e., the area formed by dashes lines.

Besides this, we also provide some illustration for these depths based on the simulated data. Three scenarios are considered here, namely, (S1) $X \sim U([-1, 1]\times[-1, 1])$, (S2) $X \sim N(\0, \mathbf{I}_{2\times 2})$, and (S3) $X = (Y^2 + Z, Z^2 + Y)^\top$ with $Y \sim N(0, 1)$ and $Z \sim N(0, 1)$. The sample size are 1000 for all of these scenarios. The support of (S1) is a finite set. The distribution in (S2) is symmetrical, while that in (S3) is skewed. The depth values of these contours are $0.2000,    0.2889, \cdots, 1$ generated by \emph{linspace(0.2, 1, 10)}. See Figures~\ref{fig:Unif}-\ref{fig:Skew} for details, where the small points stand for the data points, and the big points are their sample means.

\begin{figure}[H]
\centering
	\subfigure[$L_1$-norm zonoid depth contours]{
	\includegraphics[angle=0,width=2.9in]{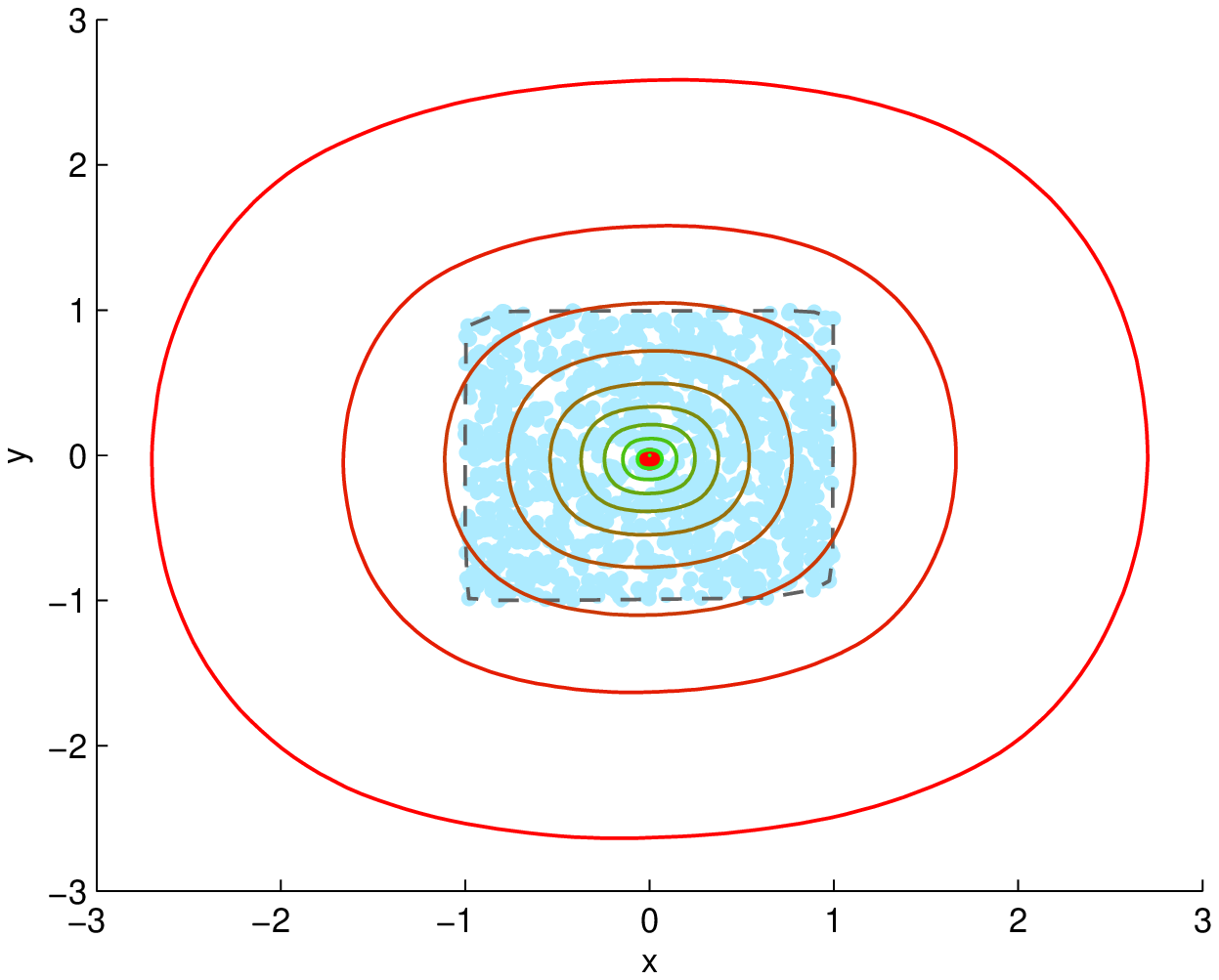}
	}\quad
	\subfigure[$L_2$-norm zonoid depth contours]{
	\includegraphics[angle=0,width=2.9in]{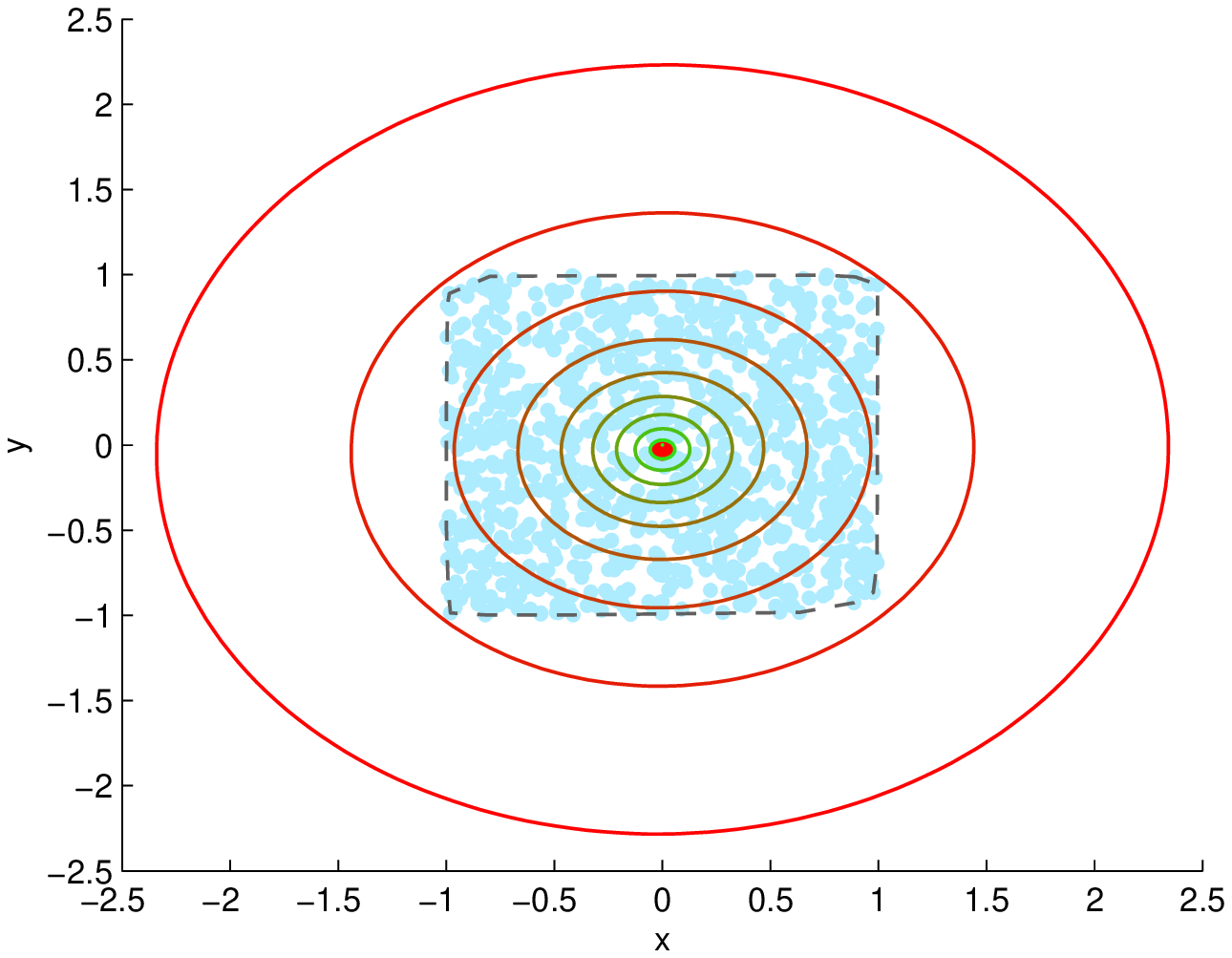}
	}
    \subfigure[$L_4$-norm zonoid depth contours]{
	\includegraphics[angle=0,width=2.9in]{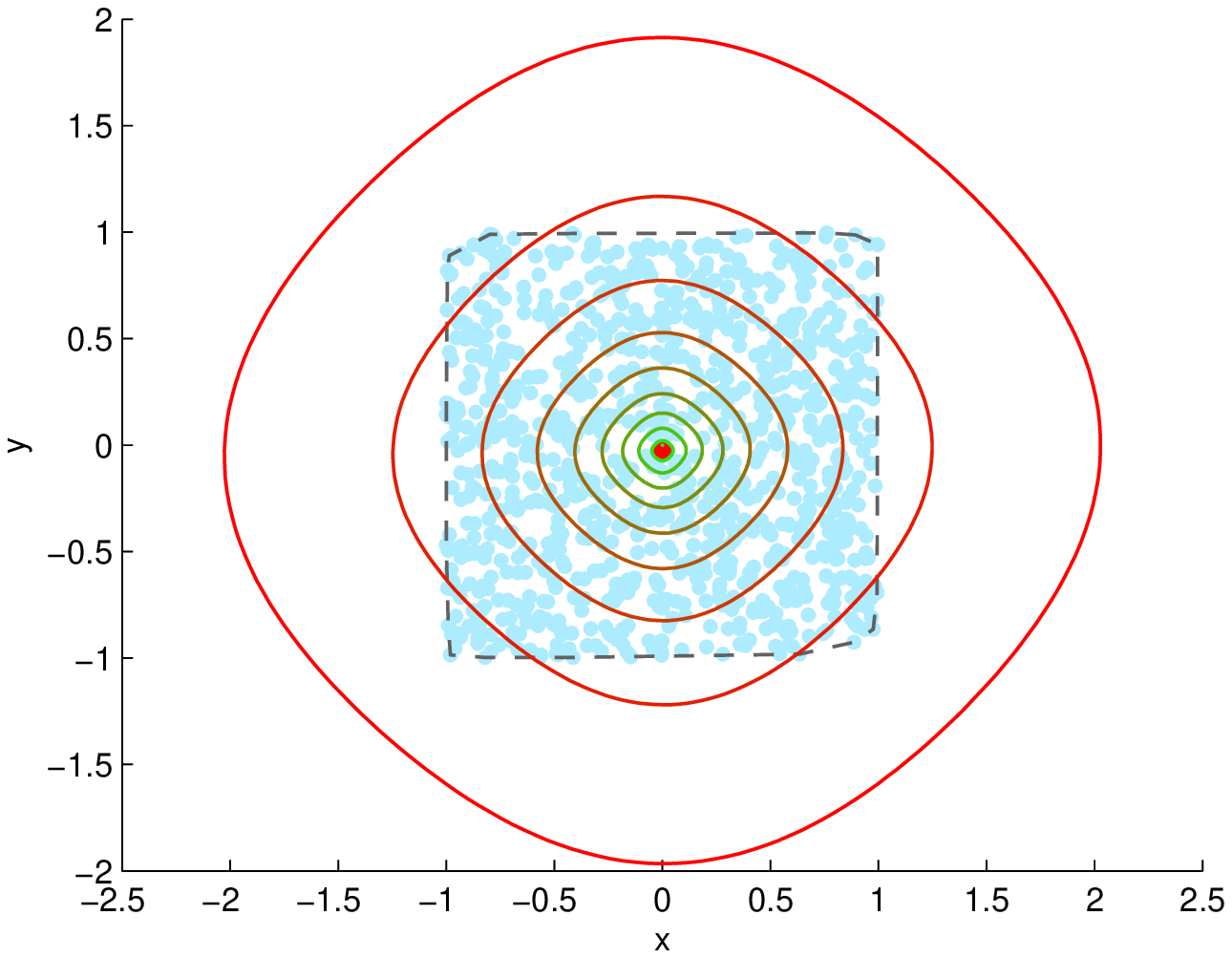}
	}\quad
    \subfigure[$L_8$-norm zonoid depth contours]{
	\includegraphics[angle=0,width=2.9in]{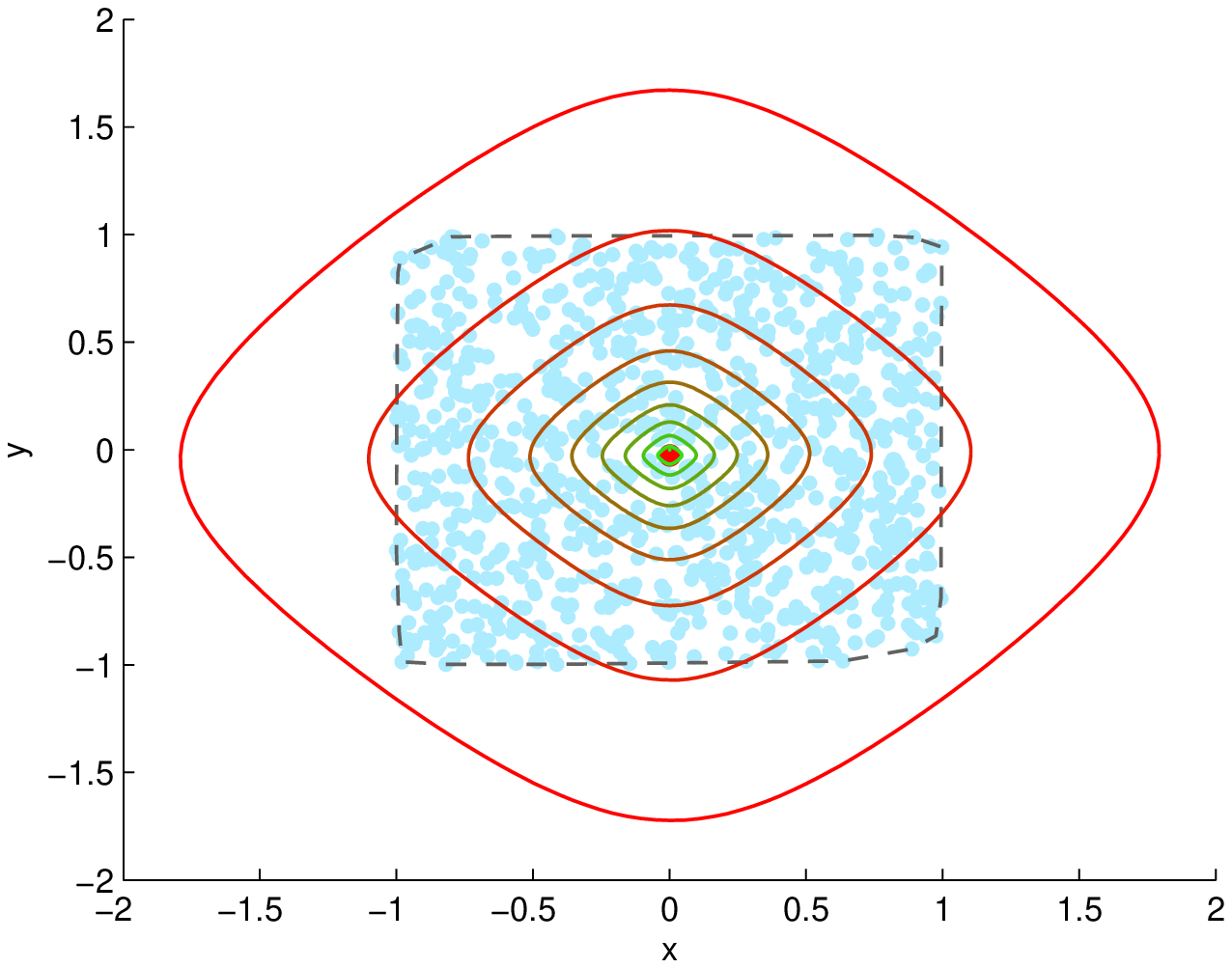}
	}
\caption{Shown are contours for $L_q$-norm zonoid depths related to Scenario (S1).}
\label{fig:Unif}
\end{figure}

Among these $L_q$-norm zonoid depths, the $L_1$-norm zonoid depth appears to be very desirable. As shown in Figures~\ref{fig:Real}-\ref{fig:Skew}, the shapes of the contours of the $L_1$-norm zonoid depth are not fixed and appear to be dependent on the data. Specially, its contours are roughly squared for the data set generated uniformly from $[-1, 1]\times [-1, 1]$, whose shape is squared. While for Scenario (S2), the shape of contours related this depth is roughly spherical. Hence, we recommend practitioners to use this depth in the applications, e.g., classification, as an alterative to the conventional zonoid depth.

\begin{figure}[H]
\centering
	\subfigure[$L_1$-norm zonoid depth contours]{
	\includegraphics[angle=0,width=2.8in]{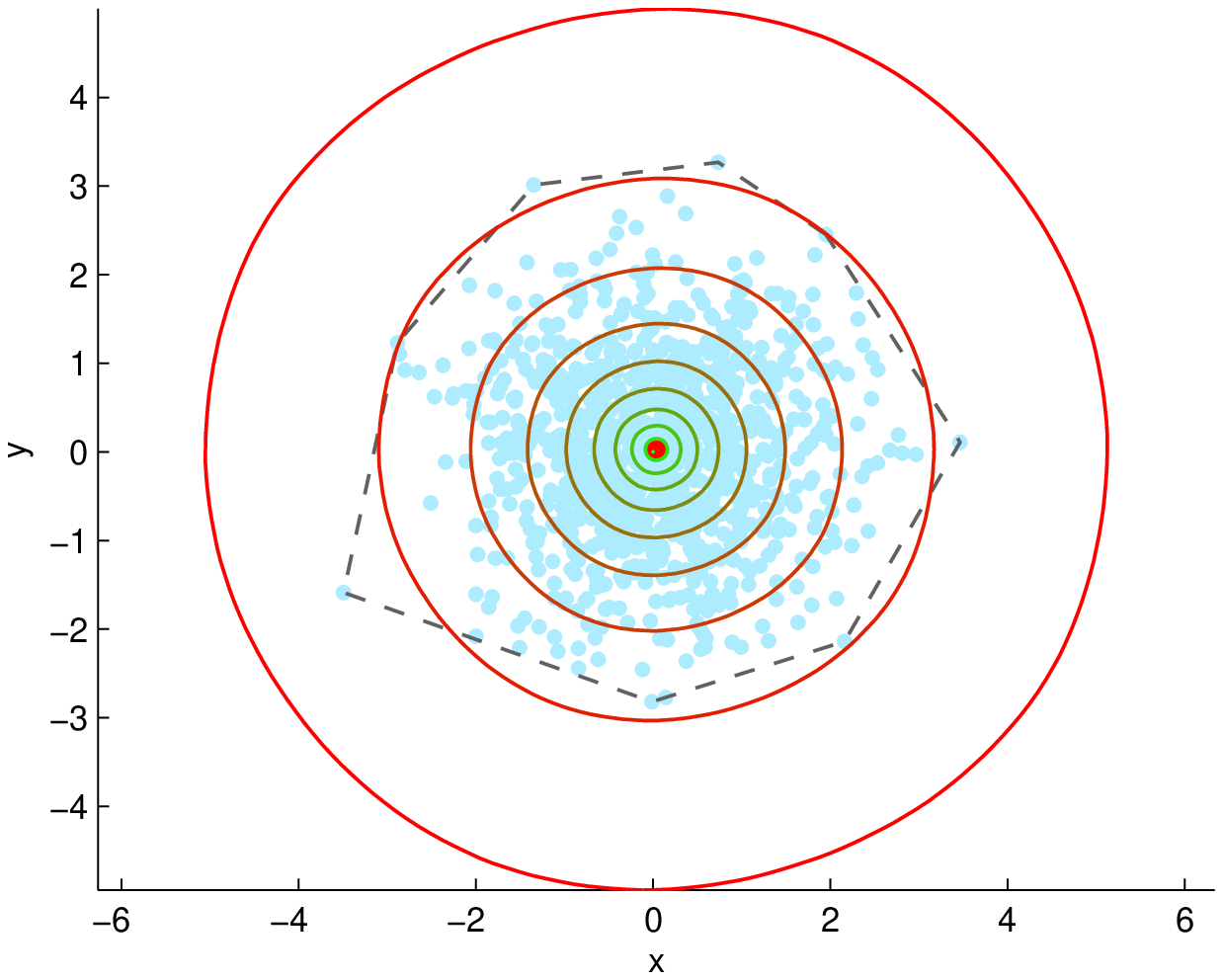}
	}\quad
	\subfigure[$L_2$-norm zonoid depth contours]{
	\includegraphics[angle=0,width=2.8in]{NormQ1.eps}
	}
    \subfigure[$L_4$-norm zonoid depth contours]{
	\includegraphics[angle=0,width=2.8in]{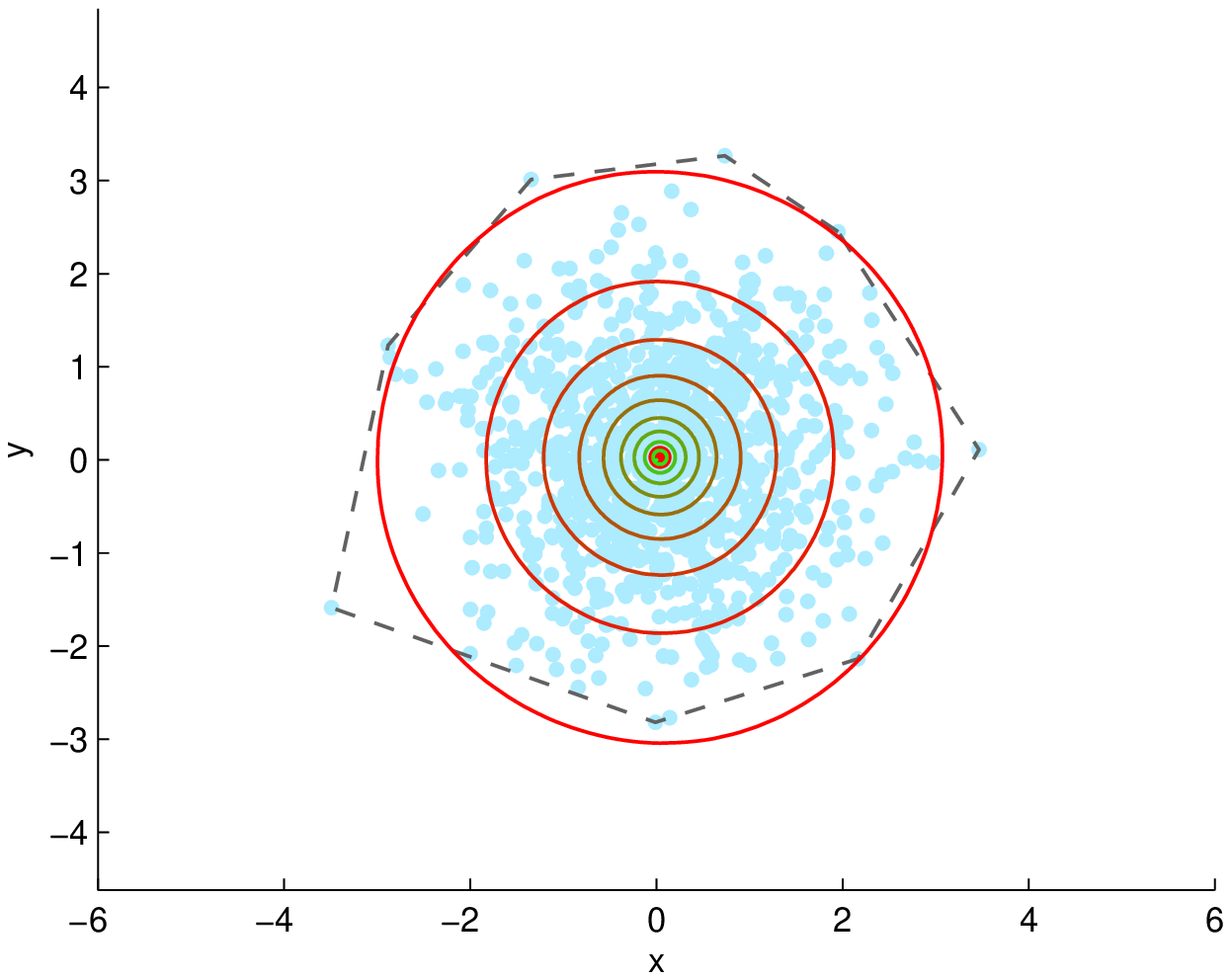}
	}\quad
    \subfigure[$L_8$-norm zonoid depth contours]{
	\includegraphics[angle=0,width=2.8in]{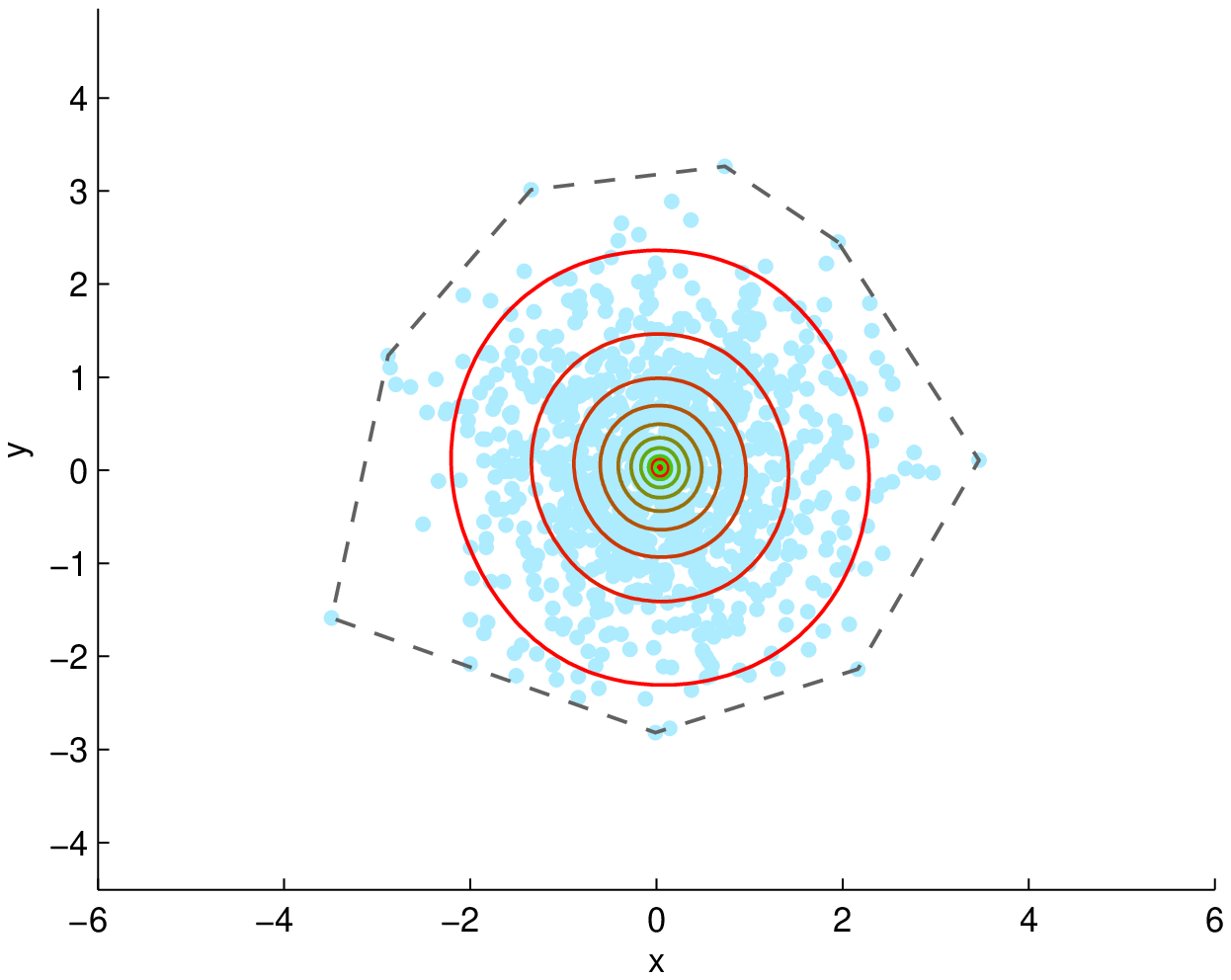}
	}
\caption{Shown are contours for $L_q$ zonoid depth related Scenario (S2).}
\label{fig:Norm}
\end{figure}

\vskip 0.1 in
\section{Concluding remarks}
\paragraph{}
\vskip 0.1 in

In this paper, we proposed a new class of statistical depth functions, which still relate to the so-called lift zonoid. Different from the conventional zonoid depth defined in \cite{KM1997}, we based the new depth functions on the $L_q$-norm between $\p$ (related to $\x$) and $\p_0$ (related to the sample mean). Since $L_q$ norm allows the components of $\p$ to be negative, the new depth functions take positive value even outside the data cloud. The data examples shows that the $L_1$-norm zonoid depth function appears to very favorite. We anticipate it to be helpful in the practical applications if the outside problem is a problem having to be taken into account.

\begin{figure}[H]
\centering
	\subfigure[$L_1$-norm zonoid depth contours]{
	\includegraphics[angle=0,width=2.9in]{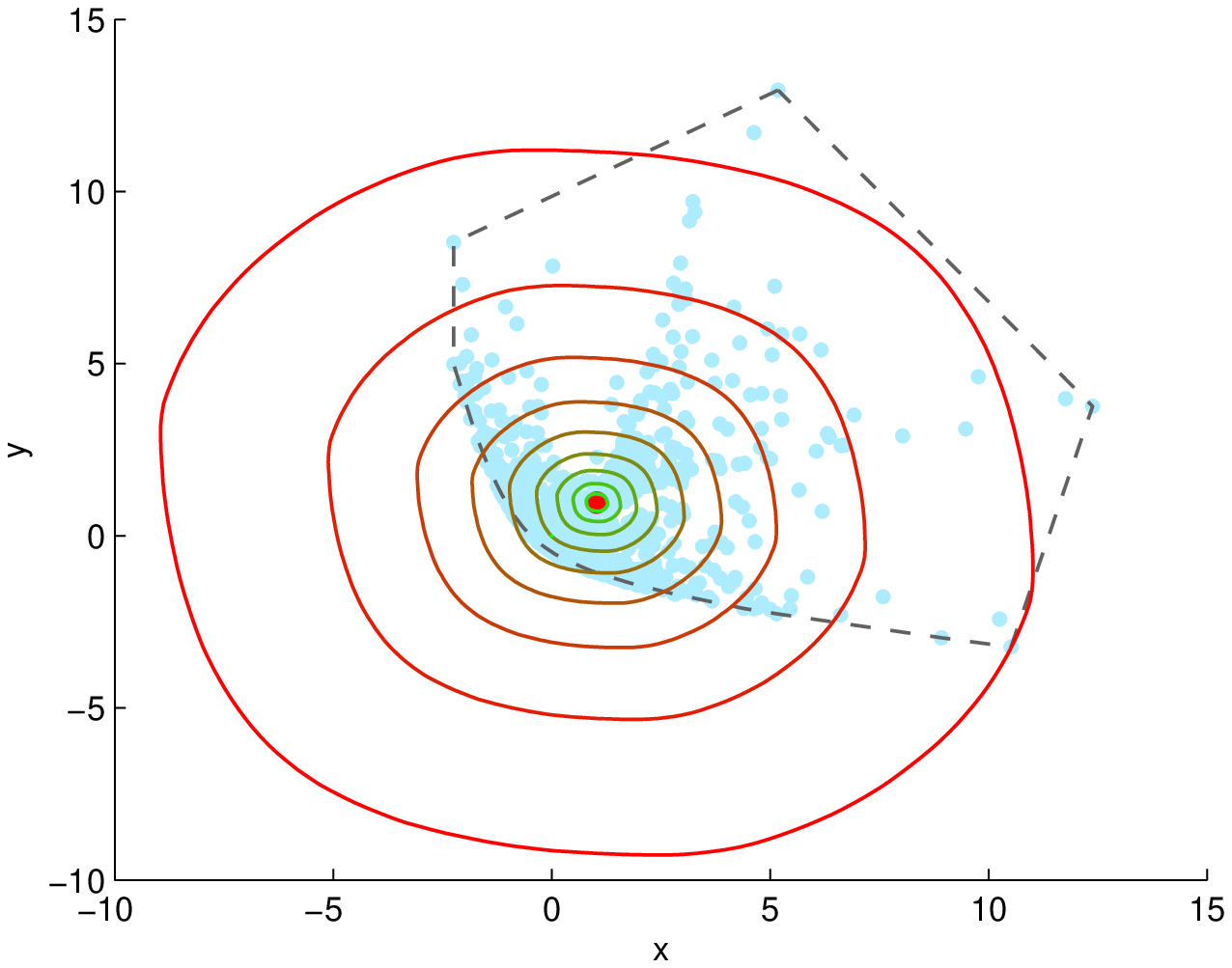}
	}\quad
	\subfigure[$L_2$-norm zonoid depth contours]{
	\includegraphics[angle=0,width=2.9in]{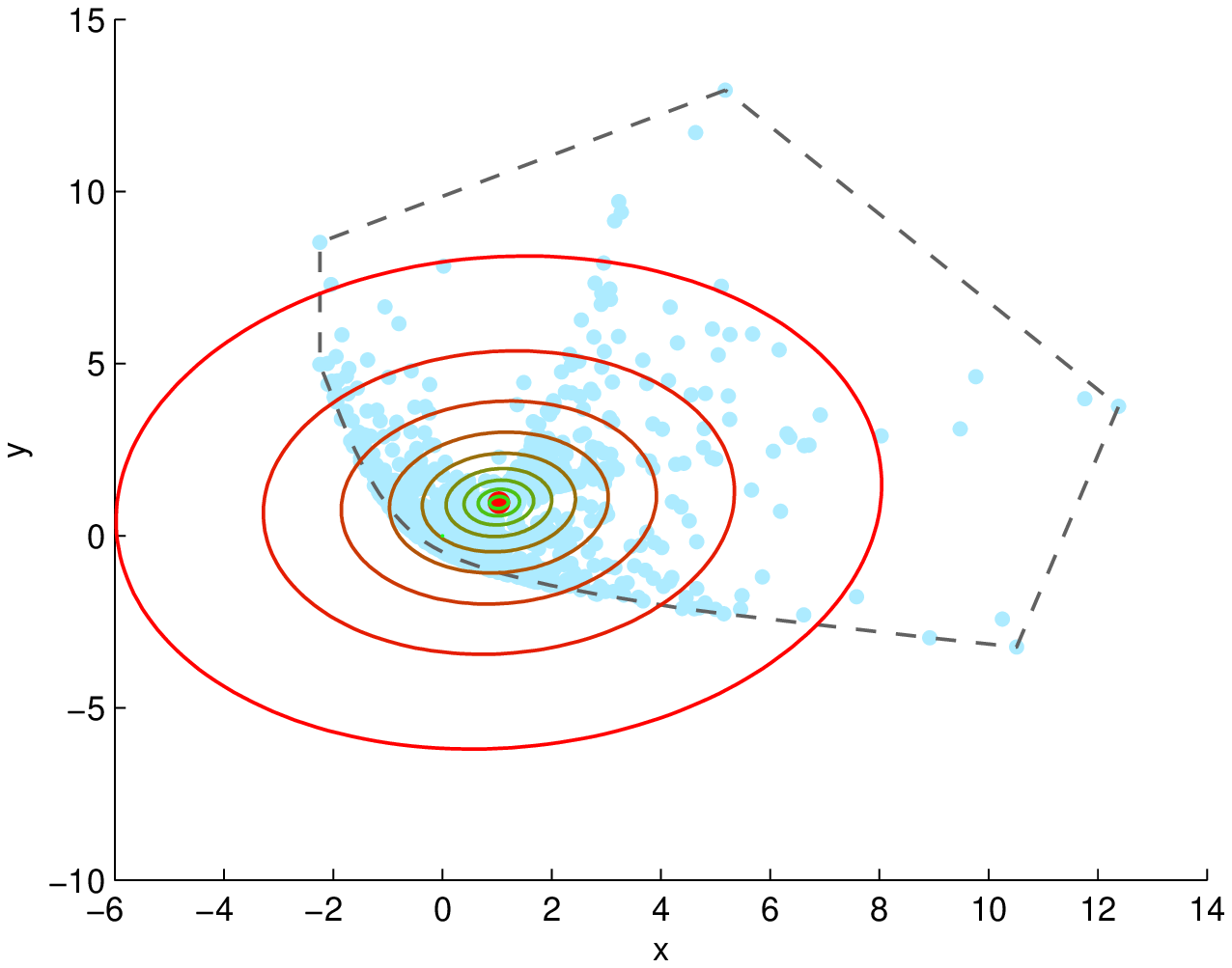}
	}
    \subfigure[$L_4$-norm zonoid depth contours]{
	\includegraphics[angle=0,width=2.9in]{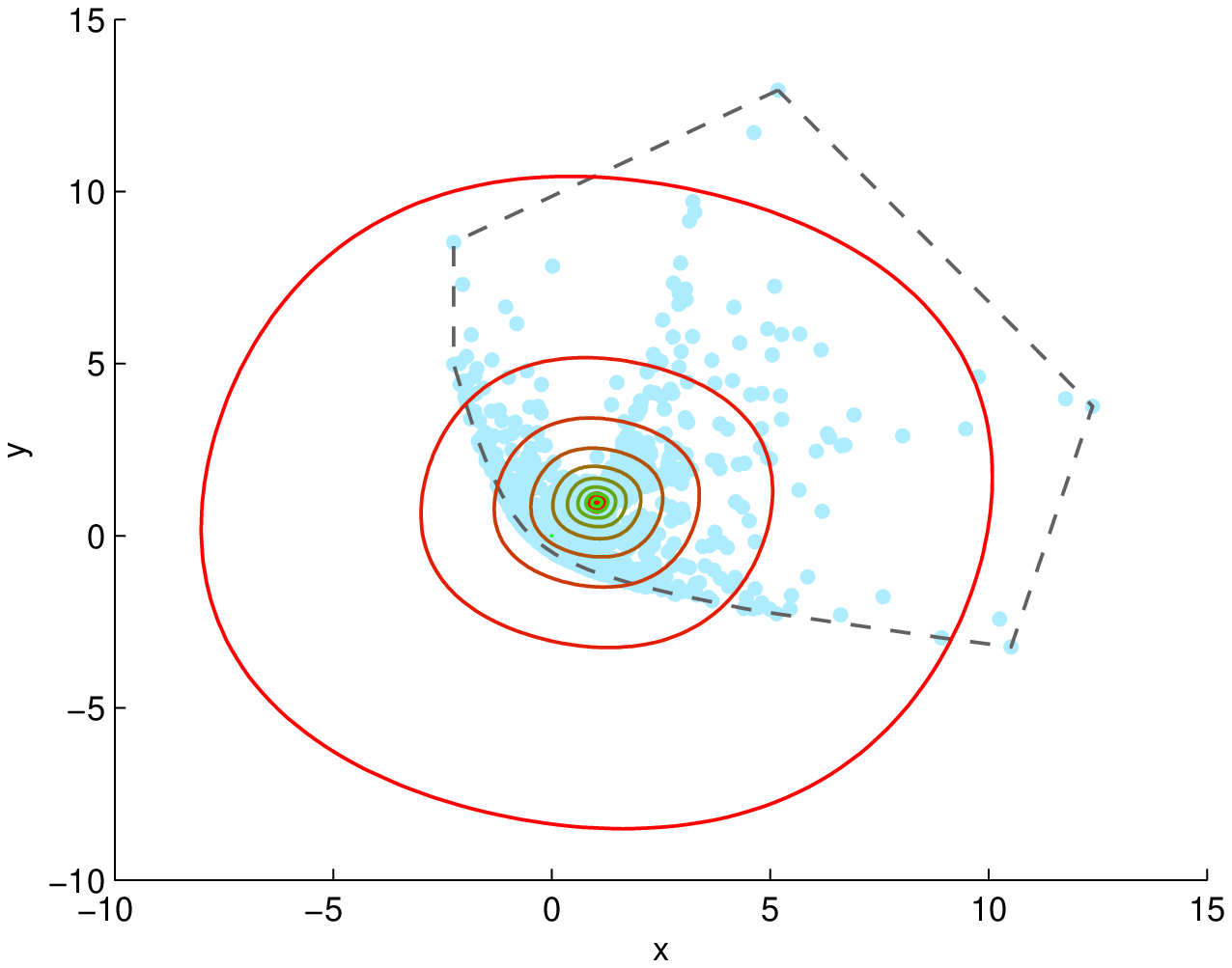}
	}\quad
    \subfigure[$L_8$-norm zonoid depth contours]{
	\includegraphics[angle=0,width=2.9in]{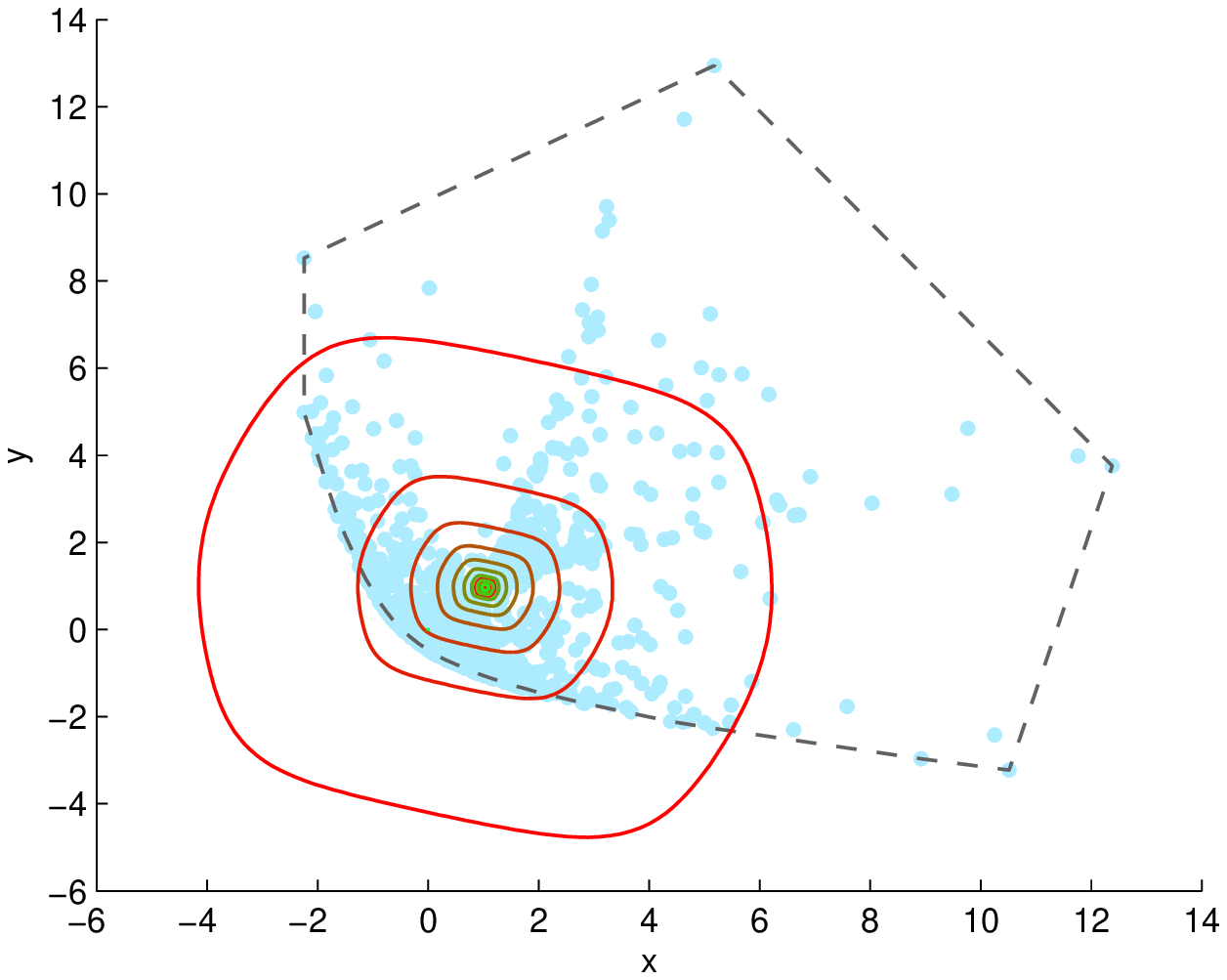}
	}
\caption{Shown are contours for $L_q$-norm depths related to Scenario (S3).}
\label{fig:Skew}
\end{figure}

\vskip 0.1 in
\section*{Acknowledgements}
\paragraph{}
\vskip 0.1 in

The research is supported by NNSF of China (Grant No.11601197, 11461029), China Postdoctoral Science Foundation funded project (2016M600511, 2017T100475),  NSF of Jiangxi Province (No.20171ACB21030, 20161BAB201024), and the Key Science Fund Project of Jiangxi provincial education department (No.GJJ150439).
\medskip

\vskip 0.1 in
\section*{Appendix: Detailed proofs of the main results}\label{Sec:Appendix}
\paragraph{}
\vskip 0.1 in

In this appendix, we provided the detailed proofs of the main proposition and theorems.

\begin{proof}[Proof of Theorem \ref{th:GZD}]
  The check of Properties P1-P2 is trivial. We omit the details.

  \emph{Property P3}. Let $\y = \bar{X} + \theta (\x - \bar{X}) = (1 - \theta) \bar{X} + \theta \x$. Let $\p_x^* \in \mathbb{P}_x$ be the weight vector satisfying
  \begin{eqnarray*}
    d_\infty(n\p_x^*, n\p_0) = \inf_{\p_x \in \mathbb{P}_x} d_\infty(n\p_x, n\p_0).
  \end{eqnarray*}
  Denote $\bar{\mathbb{P}}_x = \mathbb{P}_x \cap \{\p \in \mathbb{R}^d: \|\p - \p_x^*\| < \delta\}$ for some $\delta > 0$, where $\|\x\| = \sqrt{\x^\top \x}$. Clearly, $\bar{\mathbb{P}}_x$ is a close and bounded set. Denote $\bar{\mathbb{P}}_{y, x} = \{\p \in \mathbb{R}^d: \p = (1 - \theta) \p_0 + \theta \p_x, ~\forall \p_x \in \bar{\mathbb{P}}_x\}$. By noting that for any $\forall \gamma \in (0, 1)$, and any $\p_1, \p_2$ on the boundary of $\bar{\mathbb{P}}_{y, x}$, it holds
  \begin{eqnarray*}
    d_\infty(n((1 - \gamma) \p_1 + \gamma \p_2), n\p_0) \le (1 - \gamma) \cdot d_{\infty}(n\p_1, n\p_0) + \gamma \cdot d_{\infty} (n \p_2, n\p_0).
  \end{eqnarray*}
  This implies that $\inf_{\p\in \mathbb{P}_y} d_{\infty}(n\p, n\p_0) = \inf_{\p\in \mathbb{P}_{y,x}} d_{\infty}(n\p, n\p_0)$ by the convexity of $\bar{\mathbb{P}}_{y, x}$. Using this and the convexity of $d_{\infty}(\cdot, n\p_0)$ for fixed $\p_0$, it is easy to show Property P3 based on the construction of $\bar{\mathbb{P}}_{y, x}$.

  \emph{Property P4}. For any $\p \in \mathbb{P}_x$, observe that $\x = \mathbf{A}_X \p$. Hence, there must exist a orthogonal matrix $\mathbf{U}$ satisfying
  \begin{eqnarray*}
    \|\x\|^2 = \x^\top \x &=& \p^\top \mathbf{A}_X^\top \mathbf{A}_X \p\\
    &=& (\mathbf{U}\p)^\top (\mathbf{U} \mathbf{A}_X^\top \mathbf{A}_X \mathbf{U}^\top) (\mathbf{U} \p)\\
    &=& (\mathbf{U}\p)^\top
    \left(
    \begin{array}{ccccccc}
      \lambda_1 & & & & & &\\
       & \lambda_2& & & & &\\
       && \ddots & & & &\\
       && &\lambda_d & & &\\
       && & &0 & &\\
       && & & &\ddots &\\
       && & & &&0
    \end{array}
    \right)
    (\mathbf{U} \p)\\
    &\le& \lambda_{\max} \cdot \|\p\|^2 \rightarrow +\infty, \text{ as } \|\x\| \rightarrow +\infty,
  \end{eqnarray*}
  where $\lambda_{\max} = \max\{\lambda_1, \cdots, \lambda_d\}$ with $\lambda_1, \cdots, \lambda_d$ being the positive eigenvalues of $\mathbf{A}_X^\top \mathbf{A}_X$, when $\hat{\Sigma}_n$ is positive.

  Next, by noting that, for some $q \ge 1$,
  \begin{eqnarray*}
    \sqrt[q]{\frac{1}{n} \sum_{i=1}^n |np_i - 1|^q} &=& n^{\frac{q-1}{q}} \cdot \sqrt[q]{\sum_{i=1}^n \left|p_i - \frac{1}{n}\right|^q}\\
    &\ge&
    n^{\frac{q-1}{q}} \cdot \left(\sqrt[q]{\sum\limits_{i=1}^n \left|p_i\right|^q} - n^{\frac{1-q}{q}}\right)
  \end{eqnarray*}
  we claim that $d_{\infty}(n\p, n\p_0) \rightarrow +\infty$ as $\|\p\| \rightarrow +\infty$. Since $\p$ is any given, we in turn obtain
  \begin{eqnarray*}
    \EZD(\x, P_n) = \frac{1}{1 + \inf\limits_{\x \in \mathbb{P}_x} d_{\infty}(n\p, n\p_0)} \rightarrow 0, \text{ as } \|\x\| \rightarrow +\infty.
  \end{eqnarray*}
  This completes the proof of this theorem.
\end{proof}

\begin{proof}[Proof of Theorem \ref{th:convergence}]
  For any $\x \in \mathbb{R}^d$, let $\mathcal{G}_x := \left\{g(\cdot): \int Xg(X) dP = \x,~ \int g(X) dP = 1\right\}$, and $\mathbb{P}_{x,n} = \{(p_{n1}, p_{n2}, \cdots, p_{nn})^\top: \sum_{i=1}^n p_{ni} X_i = \x, ~\sum_{i=1}^n p_{ni} = 1\}$.

  By the construction of $\mathcal{G}_x$, it is easy to check that $\mathcal{G}_x$ is a \emph{close and convex} set. This, together with the fact that $d_q^F(\cdot, 1)$ is a convex function for any $q \in [1, +\infty]$, directly result in that there must exist a $g_0(\cdot) \in \mathcal{G}_x$ satisfying
  \begin{eqnarray*}
    d_q^F(g_0, 1) = \inf_{g\in \mathcal{G}_x} d_q^F(g, 1).
  \end{eqnarray*}

  For given i.i.d. samples $X_1, X_2, \cdots, X_n$, let $\p_n = (p_{n1}, p_{n2}, \cdots, p_{nn})^\top$, where
  \begin{eqnarray*}
    p_{ni} = \frac{g_0(X_i)}{\sum\limits_{j=1}^n g_0(X_j)},~ i = 1, 2, \cdots, n.
  \end{eqnarray*}
  Observe that
  \begin{eqnarray}
  \label{eqn:slutsky}
    \sum_{i=1}^n p_{ni} = 1, \quad \text{and} \quad \sum_{i=1}^n p_{ni} X_i \stackrel{p}{\longrightarrow} E(g_0(X)X) = \x, \text{ as } n \rightarrow +\infty
  \end{eqnarray}
  by Slutsky's lemma. Denote $\tilde{\p}_n = (\tilde p_{n1}, \tilde p_{n2}, \cdots, \tilde p_{nn})^\top \in \mathbb{P}_{x,n}$ as the projection of $\p_n$ onto $\mathbb{P}_{x, n}$. Clearly, $\bar{\p}_n = \p_n - \tilde{\p}_n$ belongs to the space spanned by the rows of the matrix ${\mathbf{A}_X \choose \textbf{1}_n^\top}$. Hence, there must exist a coefficient vector $\bm{\beta}_n \in \mathbb{R}^{d+1}$ such that $\bar{\p}_n = (\mathbf{A}_X^\top, \textbf{1}_n) \bm{\beta}_n$. It, combined with \eqref{eqn:slutsky}, leads to
  \begin{eqnarray*}
    \left\|{\mathbf{A}_X \choose \textbf{1}_n^\top} \p_n - {\x \choose 1}\right\| &=& \left\|{\mathbf{A}_X \choose \textbf{1}_n^\top} \p_n - {\mathbf{A}_X \choose \textbf{1}_n^\top} \tilde{\p}_n \right\|\\
    &=& \left\|{\mathbf{A}_X \choose \textbf{1}_n^\top}{\mathbf{A}_X \choose \textbf{1}_n^\top}^\top \bm{\beta}_n\right\| \stackrel{p}{\longrightarrow} 0,
  \end{eqnarray*}
  as $n \rightarrow +\infty$. This implies that $\|\bm{\beta}_n\| = o_p(1)$ by noting that $\textbf{1}_n^\top \mathbf{A}_X^\top(\mathbf{A}_X\mathbf{A}_X^\top)^{-1} \mathbf{A}_X \textbf{1}_n < \textbf{1}_n^\top \textbf{1}_n$ in probability 1. Hence, $\|\p_n - \tilde{\p}_n\| = o_p(1)$, and in turn we have
  \begin{eqnarray*}
    |d_q(n\tilde{\p}_n, n\p_0) - d_q(n\p_n, n\p_0)| \stackrel{p}{\longrightarrow} 0, \text{ as } n \rightarrow +\infty.
  \end{eqnarray*}
  A direct application of this leads to
  \begin{eqnarray*}
    \inf_{\p \in \mathbb{P}_{x, n}} d_q(n\p, n\p_0) &\le& d_q (n \tilde\p_n, n\p_0)\\
     &=& d_q(n\p_n, n\p_0) + o_p(1)\\
     &=& d_q^F(g_0, 1) + o_p(1) \\
     &=& \inf_{g \in \mathcal{G}_x} d_q^F(g, 1) + o_p(1), \text{ as } n \rightarrow +\infty.
  \end{eqnarray*}
  Hence, we have
  \begin{eqnarray}
  \label{eqn:sup}
    \limsup_{m\ge n} \{\inf_{\p \in \mathbb{P}_{x, m}} d_q(m\p, n\p_0)\} \leq \inf_{g \in \mathcal{G}_x} d_q^F(g, 1) + o_p(1), \text{ as } n \rightarrow +\infty.
  \end{eqnarray}

  On the other hand, for any $n$, let
  \begin{eqnarray*}
    \p_n^0 &=& (p_{n1}^0, p_{n2}^0, \cdots, p_{ni}^0)^\top \\
    &:=& (p_{n1}^0(X_1, \cdots, X_n), p_{n2}^0(X_1, \cdots, X_n), \cdots, p_{ni}^0(X_1, \cdots, X_n))^\top \in \mathbb{P}_{x,n}
  \end{eqnarray*}
  such that $d_q(n\p_n^0, n\p_0) = \inf\limits_{\p_n \in \mathbb{P}_{x,n}} d_q(n\p_n, n\p_0)$ (by noting the convexity and closeness of $\mathbb{P}_{x, n}$). Observe that, for given i.i.d. samples $X_1, X_2, \cdots, X_n$,
  \begin{eqnarray*}
    \x &=& E\left(\sum_{i=1}^n E(p_{ni}^0(X_1, \cdots, X_n) |X_i) X_i\right)\\
     &=& E\left(\sum_{i=1}^n E(p_{ni}^0(X_1, \cdots,X_{i-1}, X, X_{i+1}, \cdots, X_n) |X) X\right)\\
    1 &=& E\left(\sum_{i=1}^n E(p_{ni}^0(X_1, \cdots, X_n) |X_i)\right)\\
     &=& E\left(\sum_{i=1}^n E(p_{ni}^0(X_1, \cdots,X_{i-1}, X, X_{i+1}, \cdots, X_n) |X)\right).
  \end{eqnarray*}
  Hence, we claim that
  \begin{eqnarray*}
    g_n(X) = \sum_{i=1}^n E(p_{ni}^0(X_1, \cdots,X_{i-1}, X, X_{i+1}, \cdots, X_n) |X) = E(np_{n1}^0|X) \in \mathcal{G}_x,
  \end{eqnarray*}
  because $p_{ni}^0(X_1, X_2, \cdots, X_n)$'s are identically distributed. 
  Without confusion, we may drop the arguments of $p_{ni}^0$'s hereafter for convenience.

  Next, similar to \eqref{eqn:sup}, we have, for $q \ge 1$,
  \begin{eqnarray*}
    P(d_q^F(g_n, 1) \ge d_q(n\p_n^0, n\p_0)) \rightarrow 1, \text{ as } n \rightarrow +\infty.
  \end{eqnarray*}
  Using this, we derive
  \begin{eqnarray*}
    &&E\left|(d_q^F(g_n, 1))^q -  (d_q(n\p_n^0, n\p_0))^q\right|\\
    &&=E\left((d_q^F(g_n, 1))^q -  (d_q(n\p_n^0, n\p_0))^q\right) + o(1)\\
    &&=E\left((d_q^F(g_n, 1))^q - \frac{1}{n} \sum_{i=1}^n |np_{ni}^0 - 1|^q \right) + o(1)\\
    &&= E\left((d_q^F(g_n, 1))^q - \frac{1}{n} \sum_{i=1}^n E(|np_{ni}^0 - 1|^q|X_i) \right) + o(1)\\
    &&\le E\left((d_q^F(g_n, 1))^q - \frac{1}{n} \sum_{i=1}^n |E(np_{ni}^0|X_i) - 1|^q \right) + o(1)\\
    &&\rightarrow 0, \text{ as } n \rightarrow +\infty.
  \end{eqnarray*}
  This in fact show that $d_q^F(g_n, 1) =  d_q(n\p_n^0, n\p_0) + o_p(1)$. Using this, it is then easy to check that
  \begin{eqnarray*}
    d_q(n\p_n^0, n\p_0) &\ge& \liminf_{m\ge n} d_q(m\p_m^0, m\p_0)\\
    &=& \liminf_{m\ge n} d_q^F(g_{m}, 1) + o_p(1) \\
    &\ge&  \inf_{g\in\mathcal{G}_x} d_q^F(g, 1) + o_p(1), \text{ as } n \rightarrow +\infty.
  \end{eqnarray*}
  where $g_{n_k}(\cdot)$ denotes a convergent subsequence of $g_n(\cdot)$. $\lim_{n_k} g_{n_k} \in \mathcal{G}_x$ by noting the closeness of $\mathcal{G}_x$. This actually shows that
  \begin{eqnarray*}
    \liminf_{m\ge n} \{\inf_{\p \in \mathbb{P}_{x, m}} d_q(m\p, m\p_0)\} \ge \inf_{g\in\mathcal{G}_x} d_q^F(g, 1) + o_p(1), \text{ as } n \rightarrow +\infty.
  \end{eqnarray*}
  This, together with \eqref{eqn:sup}, complete the proof of Theorem \ref{th:convergence}.
\end{proof}

\bigskip


\begin{thebibliography}{44}
    \footnotesize \vskip 0.1 in\setlength{\itemsep}{0.03in}
    \bibitem[Boyd and Vandenberghe(2004)]{BV2004} Boyd, S. and Vandenberghe, L. (2004). {\itshape Convex Optimization}. Combridge University Press, Lindon.
    \bibitem[Hoberg(2003)]{Hob2003} Hoberg, R. (2003). Clusteranalyse, Klassifikation und Datentiefe; Reihe Quantitative \"Okonomie Band 129.
    \bibitem[Lange \etal(2014)]{LMM2014} Lange, T., Mosler, K., Mozharovskyi, P. (2014). Fast nonparametric classification based on data depth. Statistical Papers, 55(1), 49-69.
    \bibitem[Liu(1990)]{Liu1990} Liu, R. Y. (1990). On a notion of data depth based on random simplices. {\itshape Ann. Statist.}, 18: 191-219.
    \bibitem[Liu(2017)]{Liu2017} Liu, X. (2017). General empirical depths for multivariate data, \emph{Mimeo}.
    \bibitem[Liu and Zuo(2014)]{LZ2014} Liu, X., Zuo, Y. (2014). Computing projection depth and its associated estimators. Statistics and Computing, 24(1), 51-63.
    \bibitem[Koshevoy and Mosler(1997)]{KM1997} Koshevoy, G., Mosler, K. (1997). Zonoid trimming for multivariate distributions. The Annals of Statistics, 1998-2017.
    \bibitem[Mosler and Hoberg(2006)]{MH2006} Mosler, K., Hoberg, R. (2006). Data analysis and classification with the zonoid depth. DIMACS Series in Discrete Mathematics and Theoretical Computer Science, 72, 49.
    \bibitem[Owen(2001)]{Owe2001} Owen, A.B., 2001. Empirical Likelihood. Chapman \& Hall/CRC, London.
    \bibitem[Rousseeuw and Leroy(1987)]{RouLer1987} Rousseeuw, P.J. Leroy, A. (1987). {\itshape Robust regression and outlier detection}. Wiley New York, 1987.
    \bibitem[Tukey(1975)]{Tuk1975} Tukey, J.W. (1975). Mathematics and the picturing of data. {\itshape In Proceedings of the International Congress of Mathematicians}, 523-531. Cana. Math. Congress, Montreal.
    \bibitem[Zuo and Serfling(2000)]{ZS2000} Zuo, Y., Serfling, R. (2000). General notions of statistical depth function. {\itshape Ann. Statist.}, 28: 461-482.
    \bibitem[Zuo(2003)]{Zuo2003} Zuo, Y. (2003). Projection based depth functions and associated medians. {\itshape Ann. Statist.}, 31: 1460-1490.
\end{thebibliography}
\end{document}